\newtheorem{theorem}{Theorem}[section]
\newtheorem{corollary}{Corollary}[section]
\newtheorem{result}{Result}[section]
\newtheorem{example}{Example}[section]
\begin{document}
\begin{center}
\textbf{\LARGE{Generalized Symmetric Divergence Measures and the Probability of Error}}
\end{center}

\smallskip
\begin{center}
\textbf{\large{Inder Jeet Taneja}}\\
Departamento de Matem\'{a}tica\\
Universidade Federal de Santa Catarina\\
88.040-900 Florian\'{o}polis, SC, Brazil.\\
\textit{e-mail: taneja@mtm.ufsc.br\\
http://www.mtm.ufsc.br/$\sim $taneja}
\end{center}

\begin{abstract}
\textit{There are three classical divergence measures exist in the literature on
information theory and statistics. These are namely, Jeffryes-Kullback-Leiber  \cite{jef}, \cite{kul} J-divergence. Sibson-Burbea-Rao  \cite{sib}, \cite{bur} Jensen-Shannon divegernce and Taneja \cite{tan2} Arithmetic-Geometric divergence. These three measures bear an interesting relationship among each other. The divergence measures like Hellinger \cite{hel} discrimination,  symmetric $\chi ^2 - $  divergence, and triangular discrimination are also known in the literature. In this paper, we have considered generalized symmetric divergence measures having the measures given above as particular cases. Bounds on the probability of error are obtained in terms of generalized symmetric divergence measures. Study of bounds on probability of error is extended for the difference of divergence measures.}
\end{abstract}

\smallskip
\textbf{Key words:} \textit{J-divergence; Jensen-Shannon divergence; Arithmetic-Geometric divergence; Probability of Error, $f-$divergence}

\smallskip
\textbf{AMS Classification:} 94A17; 62B10.

\section{Introduction}

Let
\[
\Gamma _n = \left\{ {P = (p_1 ,p_2 ,...,p_n )\left| {p_i > 0,\sum\limits_{i
= 1}^n {p_i = 1} } \right.} \right\},
\,
n \ge 2,
\]

\noindent
be the set of all complete finite discrete probability distributions. For all $P,Q \in \Gamma _n $, the following measures are well known in the literature on information theory and statistics:

\smallskip
Let us consider the measure
\begin{equation}
\label{eq1}
\zeta _s (P\vert \vert Q) = \begin{cases}
 {J_s (P\vert \vert Q) = \left[ {s(s - 1)} \right]^{ - 1}\left[
{\sum\limits_{i = 1}^n {\left( {p_i^s q_i^{1 - s} + p_i^{1 - s} q_i^s }
\right) - 2} } \right],} & {s \ne 0,1} \\
 {J(P\vert \vert Q) = \sum\limits_{i = 1}^n {\left( {p_i - q_i } \right)\ln
\left( {\frac{p_i }{q_i }} \right),} } & {s = 0,1} \\
\end{cases}
\end{equation}

\noindent
for all $P,Q \in \Gamma _n $

\smallskip
The measure (\ref{eq1}) can be seen in Burbea and Rao \cite{bur} and Taneja \cite{tan2, tan3}. These authors studied the above measure using the multiplicative constant as $(s - 1)^{ - 1}$, while Taneja \cite{tan4} studied considering it as $\left[ {s(s -1)} \right]^{ - 1},\;s \ne 0,1$.

\smallskip
The expression (\ref{eq1}) admits the following particular cases:

(i) $\zeta _{ - 1} (P\vert \vert Q) = \zeta _2 (P\vert \vert Q) = \frac{1}{2}\Psi (P\vert \vert Q)$,

(ii) $\zeta _0 (P\vert \vert Q) = \zeta _1 (P\vert \vert Q) = J(P\vert \vert Q)$,

(iii) $\zeta _{1 / 2} (P\vert \vert Q) = 8\,h(P\vert \vert Q)$,

\smallskip
\noindent where
\begin{equation}
\label{eq2}
\Psi (P\vert \vert Q) = \chi ^2(P\vert \vert Q) + \chi ^2(Q\vert \vert P) =
\sum\limits_{i = 1}^n {\frac{(p_i - q_i )^2(p_i + q_i )}{p_i q_i }} ,
\end{equation}

\noindent and
\begin{equation}
\label{eq3}
h(P\vert \vert Q) = \frac{1}{2}\sum\limits_{i = 1}^n
{(\sqrt {p_i } - \sqrt {q_i } )^2} \, ,
\end{equation}

\noindent
are the \textit{symmetric }$\chi ^2$\textit{-- divergence} \cite{tan4, tan5} and \textit{Hellingar's discrimination} \cite{hel} respectively. More details on these divergence measures can be seen in  Taneja \cite{tan4}.

\smallskip
Let us consider now the another generalized measure
\begin{equation}
\label{eq4}
\xi _s (P\vert \vert Q) =
\begin{cases}
 {IT_s (P\vert \vert Q) = \left[ {s(s - 1)} \right]^{ - 1}\left[
{\sum\limits_{i = 1}^n {\left( {\frac{p_i^s + q_i^s }{2}} \right)\left(
{\frac{p_i + q_i }{2}} \right)} ^{1 - s} - 1} \right],} & {s \ne 0,1} \\
 {I(P\vert \vert Q) = \frac{1}{2}\left[ {\sum\limits_{i = 1}^n {p_i \ln
\left( {\frac{2p_i }{p_i + q_i }} \right) + \sum\limits_{i = 1}^n {q_i \ln
\left( {\frac{2q_i }{p_i + q_i }} \right)} } } \right],} & {s = 1} \\
 {T(P\vert \vert Q) = \sum\limits_{i = 1}^n {\left( {\frac{p_i + q_i }{2}}
\right)\ln \left( {\frac{p_i + q_i }{2\sqrt {p_i q_i } }} \right)} ,} & {s =
0} \\
\end{cases}
\end{equation}

\noindent
for all $P,Q \in \Gamma _n $

\smallskip
The measure (\ref{eq4}) is new in the literature and is studied for the first time by Taneja \cite{tan2}  and is called \textit{arithmetic and geometric mean divergence measure.} More details on these divergence measures can be seen in Taneja \cite{tan2, tan3}.

\smallskip
The measure (\ref{eq4}) admits the following particular cases:

(i) $\xi _{ - 1} (P\vert \vert Q) = \frac{1}{4}\Delta (P\vert \vert Q)$.

(ii) $\xi _1 (P\vert \vert Q) = I(P\vert \vert Q)$.

(iii) $\xi _{1 / 2} (P\vert \vert Q) = 4\;d(P\vert \vert Q)$.

(iii) $\xi _0 (P\vert \vert Q) = T(P\vert \vert Q)$.

(iv) $\xi _2 (P\vert \vert Q) = \frac{1}{16}\Psi (P\vert \vert Q)$.

\smallskip
\noindent where
\begin{equation}
\label{eq5}
\Delta (P\vert \vert Q) = \sum\limits_{i = 1}^n {\frac{(p_i - q_i )^2}{p_i + q_i }} ,
\end{equation}

\noindent and
\begin{equation}
\label{eq6}
d(P\vert \vert Q) = 1 - \sum\limits_{i = 1}^n {\left( {\frac{\sqrt {p_i } +
\sqrt {q_i } }{2}} \right)} \left( {\sqrt {\frac{p_i + q_i }{2}} } \right).
\end{equation}

\noindent
are the \textit{triangular discrimination} and \textit{d-divergence} respectively.

\smallskip
The symmetric divergence measures (\ref{eq1}) and (\ref{eq4}) admit several particular
cases. An inequality among these measures \cite{tan4} is given by
\begin{equation}
\label{eq7}
\frac{1}{4}\Delta (P\vert \vert Q) \le I(P\vert \vert Q) \le h(P\vert \vert
Q) \le 4\,d(P\vert \vert Q)
 \le \frac{1}{8}J(P\vert \vert Q) \le T(P\vert \vert Q) \le \frac{1}{16}\Psi
(P\vert \vert Q).
\end{equation}

In this paper our aim is to obtain the bounds on the probability of error in terms of the measure (\ref{eq1}) and (\ref{eq4}) and then establish examples of the seven measures given in (\ref{eq7}). Some studies on divergence measures and probability of error can be seen in Taneja \cite{tan1, tan2}.

\section{Error Probability for Two Class Case}

Suppose we have two pattern classes $C = \{C_1 ,C_2 \}$ with \textit{``a priori''} probability
$p_i = \Pr \{C = C_i \}$, $i = 1,2$. Let the feature $x$ on $X$ have a
class-conditional probability density function $P(x\vert C_i )$ are known.
Given a feature $x$ on $X$, we can calculate the conditional (\textit{a posteriori}) probability
$P(C_i \vert x)$ for each $i$, by the Bayes rule:
\[
P(C_i \vert x) = \Pr \{C = C_i \vert X = x\} = \frac{p_i P(x\vert C_i
)}{\sum\limits_{k = 1}^2 {p_k P(x\vert C_k )} },
\,
i = 1,2.
\]

It is well known that the decision rule which minimizes the probability of
error is the Bayes decision rule which chooses the hypotheses (pattern
classes) with the largest posterior probability. Using this rule, the
partial probability of error for given $X = x$ is expressed by
\[
P(e\vert x) = 1 - \max \left\{ {P(C_1 \vert x),P(C_2 \vert x)} \right\} =
\min \left\{ {P(C_1 \vert x),P(C_2 \vert x)} \right\}.
\]

Prior to observing $X$, the probability of error $P_e $, associated with $X$
is defined as the expected probability of error, i.e.,
\[
P_e = E_X \left\{ {P(e\vert x)} \right\} = \int\limits_X {P(e\vert x)p(x)dx},
\]
\noindent
where $p(x) = \sum\limits_{i = 1}^2 {p_i P(x\vert C_i )} $ is the
unconditional density of $X$evaluated as $x$.

\smallskip
In terms of the prior probabilities for the two class case, let us consider
generalized divergence measures:
\begin{equation}
\label{eq8}
\zeta _s = \begin{cases}
 {J_s ,} & {s \ne 0,1} \\
 {J,} & {s = 0,1} \\
\end{cases},
\end{equation}

\noindent where
\[
J_s = \left[ {s(s - 1)} \right]^{ - 1}\left[ {\int {\left( {p(x\vert C_1
)^sp(x\vert C_2 )^{1 - s} + p(x\vert C_2 )^sp(x\vert C_1 )^{1 - s}}
\right)dx - 2} } \right],
\,
s \ne 0,1
\]

\noindent and
\[
J = \int {\left( {P(x\vert C_1 ) - P(x\vert C_2 )} \right)\log \left(
{\frac{P(x\vert C_1 )}{P(x\vert C_2 )}} \right)} dx.
\]

Let us consider now
\begin{equation}
\label{eq9}
\xi _s = \begin{cases}
 {IT_s ,} & {s \ne 0,1} \\
 {I,} & {s = 1} \\
 {T,} & {s = 0} \\
\end{cases},
\end{equation}

\noindent where
\[
IT_s = \left[ {s(s - 1)} \right]^{ - 1}\left[ {\int {\left( {\frac{p(x\vert
C_1 )^s + p(x\vert C_2 )^s}{2}} \right)} \left( {\frac{p(x\vert C_1 ) +
p(x\vert C_1 )}{2}} \right)^{1 - s}dx - 1} \right],
\, s \ne 0,1
\]
\[
I = \int {\left[ {\frac{P(x\vert C_1 )}{2}\ln \left( {\frac{2P(x\vert C_1
)}{P(x\vert C_1 ) + P(x\vert C_2 )}} \right) + \frac{P(x\vert C_2 )}{2}\ln
\left( {\frac{2P(x\vert C_2 )}{P(x\vert C_1 ) + P(x\vert C_2 )}} \right)}
\right]} dx,
\]

\noindent and
\[
T = \int {\left( {\frac{P(x\vert C_1 ) + P(x\vert C_2 )}{2}} \right)} \ln
\left( {\frac{P(x\vert C_1 ) + P(x\vert C_2 )}{2\sqrt {P(x\vert C_1
)P(x\vert C_2 )} }} \right)dx.
\]

Let us define the above measures in more general way
\[
\zeta _s (p_1 ,p_2 ) = \begin{cases}
 {J_s (p_1 ,p_2 ),} & {s \ne 0,1} \\
 {J,(p_1 ,p_2 )} & {s = 0,1} \\
\end{cases},
\]

\noindent where
\begin{align}
J_s (p_1 ,p_2 ) = \left[ {s(s - 1)} \right]^{ - 1} & \left\{ {\int {\left[
{\left( {p_1 p(x\vert C_1 )} \right)^s\left( {p_2 p(x\vert C_2 )} \right)^{1
- s}} \right.} } \right.\notag\\
& \left. {\left. { + \left( {p_2 p(x\vert C_2 )} \right)^s\left( {p_1 p(x\vert
C_1 )} \right)^{1 - s}} \right]dx - 1} \right\} ,
\, s \ne 0,1\notag
\end{align}

\noindent and
\[
J = \int {\left( {p_1 P(x\vert C_1 ) - p_2 P(x\vert C_2 )} \right)\ln \left(
{\frac{p_1 P(x\vert C_1 )}{p_2 P(x\vert C_2 )}} \right)} dx.
\]

Similarly let us define
\[
\xi _s (p_1 ,p_2 ) = \begin{cases}
 {IT_s (p_1 ,p_2 ),} & {s \ne 0,1} \\
 {I(p_1 ,p_2 ),} & {s = 1} \\
 {T(p_1 ,p_2 ),} & {s = 0} \\
\end{cases},
\]

\noindent where
\begin{align}
IT_s (p_1 ,p_2 ) & = \left[ {s(s - 1)} \right]^{ - 1}  \left\{ {\int {\left[
{\left( {\frac{\left( {p_1 p(x\vert C_1 )} \right)^s + \left( {p_2 p(x\vert
C_2 )} \right)^s}{2}} \right)} \right.} } \right. \cdot \notag\\
& \hspace{120pt}  \cdot \left. {\left. {\left( {\frac{p_1 p(x\vert C_1 ) + p_2 p(x\vert C_1
)}{2}} \right)^{1 - s}dx} \right] - \frac{1}{2}} \right\} ,
\notag\\
I (p_1 ,p_2 )& =  \frac{1}{2} \left\{ {\int {\left[ {p_1 P(x\vert C_1 ) \ln \left(
{\frac{2p_1 P(x\vert C_1 )}{p_1 P(x\vert C_1 ) + p_1 P(x\vert C_2 )}}
\right)} \right.} } \right.\notag\\
& \hspace{100pt} \left. {\left. { + \,p_2 P(x\vert C_2 )\ln \left( {\frac{2p_2 P(x\vert C_2
)}{p_1 P(x\vert C_1 ) + p_1 P(x\vert C_2 )}} \right)} \right]dx} \right\} \notag\\
\intertext{and}
T(p_1 ,p_2 ) &  = \int {\left( {\frac{p_1 P(x\vert C_1 ) + p_2 P(x\vert C_2 )}{2}}
\right)} \ln \left( {\frac{p_1 P(x\vert C_1 ) + p_2 P(x\vert C_2 )}{2\sqrt
{p_1 p_2 P(x\vert C_1 )P(x\vert C_2 )} }} \right)dx.\notag
\end{align}

It is easy to verify
\[
\zeta _s \left( {\frac{1}{2},\frac{1}{2}} \right) = \left( {\frac{1}{2}}
\right)\zeta _s ,
\]

\noindent and
\[
\xi _s \left( {\frac{1}{2},\frac{1}{2}} \right) = \left( {\frac{1}{2}}
\right)\xi _s ,
\]

\noindent where $\zeta _s $ and $\xi _s $ are as given by (\ref{eq8}) and (\ref{eq9})
respectively.

\smallskip
Using Bayes rules, $p_1 P(x\vert C_1 ) = p(x)P(C_1 \vert x)$ and $p_2 P(x\vert C_2 ) = p(x)P(C_2 \vert x)$, we can write
\[
\zeta _s (p_1 ,p_2 ) = \int {p(x)} \zeta _s (x)dx
\]

\noindent and
\[
\xi _s (p_1 ,p_2 ) = \int {p(x)} \xi _s (x)dx,
\]

\noindent where
\begin{equation}
\label{eq10}
\zeta _s (x) = \begin{cases}
 {\begin{array}{l}
 J_s (x) = [s(s - 1)]^{ - 1}\left[ {P(C_1 \vert x)^sP(C_2 \vert x)^{1 - s}}
\right. \\
 \left. {\, \, \, \, \;\, \, + \,P(C_2 \vert x)^sP(C_2
\vert x)^{1 - s} - 1} \right], \\
 \end{array}} & {\begin{array}{l}
 \\
 s \ne 0,1 \\
 \end{array}} \\
 {J(x) = \left( {P(C_1 \vert x) - P(C_2 \vert x)} \right)\ln \left(
{\frac{P(C_1 \vert x)}{P(C_2 \vert x)}} \right),} & {s = 0,1} \\
\end{cases}
\end{equation}

\noindent and
\[
\xi _s (x) = \begin{cases}
 {IT_s (x) = \left[ {s(s - 1)} \right]^{ - 1}\left[ {\left( {\frac{P(C_1
\vert x)^s + P(C_2 \vert x)^s}{2}} \right)\left( {\frac{1}{2}} \right)^{1 -
s} - \frac{1}{2}} \right],} & {s \ne 0,1} \\
 {\begin{array}{l}
 I(x) = \frac{1}{2}\left[ {P(C_1 \vert x)\ln \left( {\frac{2P(C_1 \vert
x)}{P(C_1 \vert x) + P(C_2 \vert x)}} \right)} \right. \\
 \left. {\, \, \, \, + \,\,P(C_2 \vert x)\ln \left(
{\frac{2P(C_2 \vert x)}{P(C_1 \vert x) + P(C_2 \vert x)}} \right)} \right],
\\
 \end{array}} & {\begin{array}{l}
 \\
 \\
 s = 1 \\
 \end{array}} \\
 {T(x) = \left( {\frac{P(C_1 \vert x) + P(C_2 \vert x)}{2}} \right)\ln
\left( {\frac{P(C_1 \vert x) + P(C_2 \vert x)}{2\sqrt {P(C_1 \vert x)P(C_2
\vert x)} }} \right),} & {s = 0} \\
\end{cases}.
\]

\subsection{ Bounds on Generalized Divergence Measures}

In this paper we shall obtain error bounds in terms of these generalized
divergence measures. Before that we review some known results.

\smallskip
There exist in the literature lower bounds on $P_e $ in terms of different
divergence measures. For review refer to Taneja \cite{tan1, tan2}.
Kailath \cite{kai} gave the following bound:
\[
P_e \ge \frac{1}{4}\exp \left( { - \frac{J}{2}} \right),
\]

\noindent
where the equality holds for $J = \infty $. Later, Toussaint \cite{tou1, tou2}  gave a
tighter and general bound as
\[
P_e \ge \frac{1}{2} - \frac{1}{2}\sqrt {1 - 4\exp \left[ { - 2\,H(p_1 ,p_2 )
- J(p_1 ,p_2 )} \right]} ,
\]

\noindent where
\[
H(p_1 ,p_2 ) = - p_1 \log p_1 - p_2 \log p_2 .
\]

For $p_1 = p_2 = \frac{1}{2}$, we have
\[
P_e \ge \frac{1}{2} - \frac{1}{2}\sqrt {1 - 4\exp \left( { - \frac{J}{2}}
\right)} ,
\]

\noindent
where the equality hold for both $J = 0$ and $J = \infty $.

\smallskip
Toussaint \cite{tou1, tou2}  still obtain a sharper bound on $J$ in terms of $P_e $ as
\[
J(p_1 ,p_2 ) \ge (2P_e - 1)\log \left( {\frac{P_e }{1 - P_e }} \right).
\]

When$p_1 = p_2 = \frac{1}{2}$, one gets
\[
J \ge 2(2P_e - 1)\log \left( {\frac{P_e }{1 - P_e }} \right).
\]

\begin{theorem} We have
\[
\zeta _s (p_1 ,p_2 ) \ge \zeta _s (P_e ),
\]

\noindent where
\[
\zeta _s (P_e ) = \begin{cases}
 {J_s (P_e ) = [s(s - 1)]^{ - 1}\left[ {P_e ^s(1 - P_e )^{1 - s} + (1 - P_e
)^sP_e ^{1 - s} - 1} \right]} & {s \ne 0,1} \\
 {J(P_e ) = \left( {2P_e - 1} \right)\ln \left( {\frac{P_e }{1 - P_e }}
\right),} & {s = 0,1} \\
\end{cases}.
\]
\end{theorem}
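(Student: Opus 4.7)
The plan is to combine the posterior representation $\zeta_s(p_1,p_2) = \int p(x)\,\zeta_s(x)\,dx$, already derived via Bayes' rule just above the theorem, with Jensen's inequality applied to a one-variable reduction of the integrand. First, I would observe that for every $s$ the integrand $\zeta_s(x)$ in (\ref{eq10}) is a \emph{symmetric} function of the pair of posteriors $a = P(C_1|x)$ and $b = P(C_2|x)$: for $s\ne 0,1$ it depends on $(a,b)$ only through $a^s b^{1-s} + a^{1-s} b^s$, and for $s\in\{0,1\}$ only through $(a-b)\ln(a/b)$. Combined with $a+b=1$ and $P(e|x) = \min(a,b) \in [0,1/2]$, this lets me factor $\zeta_s(x) = \phi_s(P(e|x))$, where $\phi_s\colon [0,1/2] \to \mathbb{R}$ is exactly the function written in the theorem statement,
\[
\phi_s(t) = \begin{cases} [s(s-1)]^{-1}\bigl[t^s(1-t)^{1-s} + t^{1-s}(1-t)^s - 1\bigr], & s \ne 0,1,\\ (2t-1)\ln\!\bigl(t/(1-t)\bigr), & s \in \{0,1\}.\end{cases}
\]

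Since $P_e = \int p(x)\,P(e|x)\,dx$, Jensen's inequality then yields
\[
\zeta_s(p_1,p_2) \;=\; \int p(x)\,\phi_s(P(e|x))\,dx \;\ge\; \phi_s\!\Bigl(\int p(x)\,P(e|x)\,dx\Bigr) \;=\; \phi_s(P_e) \;=\; \zeta_s(P_e),
\]
which is the claim --- \emph{provided} $\phi_s$ is convex on $[0,1/2]$.

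The only real work is therefore verifying this convexity. For $s\in\{0,1\}$ a direct second-derivative computation on $\phi(t) = (2t-1)\ln(t/(1-t))$ gives $\phi''(t) = 4/[t(1-t)] + (2t-1)^2/[t(1-t)]^2 > 0$ on $(0,1)$. For $s \ne 0,1$ the cleanest route is to recognise $\phi_s(t)$ as the symmetric $s$-divergence of (\ref{eq1}) between the two $2$-point distributions $(t,1-t)$ and $(1-t,t)$, both of which depend affinely on $t$; since $\zeta_s(P\|Q)$ is an $f$-divergence whose generator is convex (the sign flip of $[s(s-1)]^{-1}$ at $s\in\{0,1\}$ is exactly what preserves convexity of the generator across all admissible $s$), the standard joint convexity of $f$-divergences makes $\phi_s$ convex in $t$. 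The main obstacle is carrying out this convexity argument cleanly and uniformly in $s$, together with handling the boundary value at $t=0$ (where $\phi_s$ may be $+\infty$ for $s\notin(0,1)$) by continuity/monotone convergence; a cruder but self-contained alternative is to compute $\phi_s''(t)$ directly and factor it into manifestly non-negative pieces.
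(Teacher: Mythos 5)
Your proposal is correct and is essentially the paper's own proof: the paper likewise uses the symmetry of $\zeta_s(x)$ in the two posteriors to rewrite it as the stated function of $P_e(x)$, verifies that this function is convex on $(0,\tfrac12)$, and concludes by taking expectations and applying Jensen's inequality. The only difference is in the convexity step, where the paper uses exactly the ``cruder'' alternative you mention --- a direct computation giving $f_s''(a) = a^{s-2}(1-a)^{-s-1} + (1-a)^{s-2}a^{-s-1} \ge 0$ for $s \ne 0,1$ --- and handles $s = 0,1$ by continuity in $s$ rather than by your (also valid) separate second-derivative computation or the joint-convexity-of-$f$-divergences argument.
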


\begin{proof} It is sufficient to show the result only for $s \ne 0,1$, since for $s = 0,1$ the result is already known in the literature or follows in view of continuity with respect parameter $s$.

\smallskip
We have
\[
P_e (x) = \min \left\{ {P(C_1 \vert x),\,P(C_2 \vert x)} \right\}.
\]

As $\zeta _s (x)$ given by (\ref{eq10}) is symmetric with respect to $P(C_1 \vert
x)$ and $P(C_2 \vert x)$, let us consider $P(C_1 \vert x) = P_e (x)$ and
$P(C_2 \vert x) = 1 - P_e (x)$. Making these substitutions in (\ref{eq10}), we get
\begin{equation}
\label{eq11}
\zeta _s (x) = \begin{cases}
 {J_s (x) = [s(s - 1)]^{ - 1}\left[ {P_e (x)^s\left( {1 - P_e (x)}
\right)^{1 - s}\left( {1 - P_e (x)} \right)^sP_e (x)^{1 - s} - 1} \right],}
& {s \ne 0,1} \\
 {J(x) = \left( {2P_e (x) - 1} \right)\ln \left( {\frac{P_e (x)}{1 - P_e
(x)}} \right),} & {s = 0,1} \\
\end{cases}.
\end{equation}

Now consider a function
\[
f_s (a) = \begin{cases}
 {[s(s - 1)]^{ - 1}\left[ {a^s\left( {1 - a} \right)^{1 - s} + \,\left( {1 -
a)} \right)^sa^{1 - s} - 1} \right],} & {s \ne 0,1} \\
 {\left( {2a - 1} \right)\ln \left( {\frac{a}{1 - a}} \right),} & {s = 0,1}
\\
\end{cases}
\]

\noindent or, simply,
\[
f_s (a) = [s(s - 1)]^{ - 1}\left[ {a^s\left( {1 - a} \right)^{1 - s} +
\,\left( {1 - a)} \right)^sa^{1 - s} - 1} \right],\;s \ne 0,1.
\]

We have
\begin{align}
{f}'_s (a) = \left[ {s(s - 1)a(1 - a)} \right]^{ - 1} & \left[ {(s - 1)a^s}
\right.(1 - a)^{1 - s} \notag\\
& \left. { + \,(1 - s)a^{1 - s}(1 - a) - (1 - a)^sa^{2 - s}} \right]\notag
\end{align}

\noindent and
\[
{f}''_s (a) = a^{s - 2}(1 - a)^{ - s - 1} + (1 - a)^{s - 2}a^{ - s - 1}
\]

\noindent
for all $0 < a < \textstyle{1 \over 2}$.

\smallskip
This gives${f}''_s (a) \ge 0$, for any$s \ne 0,1$. For $s = 0,1$ the result
follows by continuity, i.e, $f_s (a)$ is convex function of $a$ ($0 < a <
\textstyle{1 \over 2})$ for any $s \in ( - \infty ,\infty )$.

\smallskip
Taking expectation on both sides of (\ref{eq11}) and using Jensen's inequality we
get the required result.
\end{proof}

\begin{corollary} When $p_1 = p_2 = \frac{1}{2}$, we have
\[
\zeta _s \ge \begin{cases}
 {2[s(s - 1)]^{ - 1}\left[ {P_e ^s(1 - P_e )^{1 - s} + (1 - P_e )^sP_e ^{1 -
s} - 1} \right]} & {s \ne 0,1} \\
 {2\left( {2P_e - 1} \right)\ln \left( {\frac{P_e }{1 - P_e }} \right),} &
{s = 0,1} \\
\end{cases}
\]
\end{corollary}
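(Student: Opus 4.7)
The plan is to observe that this corollary is an immediate specialization of Theorem~1 combined with the identity $\zeta_s(1/2,1/2)=\tfrac{1}{2}\zeta_s$ recorded earlier in the section. No new analytic input (no new convexity argument, no new Bayes-rule manipulation) is needed; the work has already been done in Theorem~1.

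Concretely, I would carry out three short steps. First, specialize Theorem~1 to the equiprior case $p_1=p_2=\tfrac12$, obtaining
\[
\zeta_s\!\left(\tfrac12,\tfrac12\right)\ \ge\ \zeta_s(P_e),
\]
where the right-hand side is the piecewise quantity $J_s(P_e)$ (for $s\ne0,1$) or $J(P_e)$ (for $s=0,1$) defined in the theorem. Second, apply the identity $\zeta_s(1/2,1/2)=\tfrac{1}{2}\zeta_s$, with $\zeta_s$ as given by (\ref{eq8}); substituting this on the left and clearing the factor $\tfrac12$ yields
\[
\zeta_s\ \ge\ 2\,\zeta_s(P_e).
\]
Third, I would expand $2\,\zeta_s(P_e)$ according to the two branches in the statement of Theorem~1 to recover exactly the two-case expression displayed in the corollary.

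There is no real obstacle; the only care needed is bookkeeping of the factor $2$ produced by the relation $\zeta_s(1/2,1/2)=\tfrac{1}{2}\zeta_s$, which is precisely the source of the constant $2$ multiplying both branches on the right-hand side. The continuity remark used in Theorem~1 to pass from $s\ne 0,1$ to $s=0,1$ carries over verbatim, so the $s=0,1$ case of the corollary requires no separate treatment.
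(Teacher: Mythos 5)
Your proposal is correct and matches the paper's (implicit) route exactly: the corollary is presented as an immediate consequence of Theorem~2.1 together with the identity $\zeta_s\left(\frac{1}{2},\frac{1}{2}\right) = \frac{1}{2}\zeta_s$ stated earlier in the section, with the factor $2$ arising precisely as you describe. Nothing further is needed.
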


\begin{theorem} We have
\[
\xi _s (p_1 ,p_2 ) \ge \xi _s (P_e ),
\]

\noindent where
\[
\xi _s (P_e ) = \begin{cases}
 {IT_s (P_e ) = \left[ {s(s - 1)} \right]^{ - 1}\left[ {\left( {\frac{P_e ^s
+ \left( {1 - P_e } \right)^s}{2}} \right)\left( {\frac{1}{2}} \right)^{1 -
s} - \frac{1}{2}} \right],} & {s \ne 0,1} \\
 {I(P_e ) = \frac{1}{2}\left[ {P_e \ln \left( {\frac{2P_e }{2P_e - 1}}
\right)\left. { + \,\,\left( {1 - P_e } \right)\ln \left( {\frac{2\left( {1
- P_e } \right)}{2P_e - 1}} \right)} \right],} \right.} & {s = 1} \\
 {T(P_e ) = \left( {\frac{2P_e - 1}{2}} \right)\ln \left( {\frac{2P_e -
1}{2\sqrt {P_e \left( {1 - P_e } \right)} }} \right),} & {s = 0} \\
\end{cases}.
\]
\end{theorem}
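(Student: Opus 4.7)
The plan is to mirror the argument used for Theorem 2.1. The measure $\xi_s(x)$, just like $\zeta_s(x)$, is symmetric in $P(C_1\vert x)$ and $P(C_2\vert x)$, so without loss of generality I can set $P(C_1\vert x) = P_e(x)$ and $P(C_2\vert x) = 1 - P_e(x)$, since $P_e(x) = \min\{P(C_1\vert x), P(C_2\vert x)\}$ and $P(C_1\vert x)+P(C_2\vert x)=1$. Substituting into the expression for $\xi_s(x)$ gives, for $s \ne 0,1$,
\[
\xi_s(x) = [s(s-1)]^{-1}\left[\left(\frac{P_e(x)^s + (1-P_e(x))^s}{2}\right)\left(\frac{1}{2}\right)^{1-s} - \frac{1}{2}\right],
\]
with the analogous substitutions for $s=0$ and $s=1$, which I will treat by continuity in $s$.

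Next I would introduce the one-variable function
\[
g_s(a) = [s(s-1)]^{-1}\left[\left(\frac{a^s + (1-a)^s}{2}\right)\left(\frac{1}{2}\right)^{1-s} - \frac{1}{2}\right], \qquad s \ne 0, 1,
\]
defined for $a \in (0, 1/2)$, so that $\xi_s(x) = g_s(P_e(x))$. The main technical step is to verify convexity of $g_s$. A direct computation of the second derivative yields
\[
g_s''(a) = \frac{1}{2}\left(\frac{1}{2}\right)^{1-s}\left[a^{s-2} + (1-a)^{s-2}\right],
\]
where the prefactor $[s(s-1)]^{-1}$ cancels against the $s(s-1)$ produced by differentiating $a^s$ and $(1-a)^s$ twice. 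Since this is manifestly nonnegative, $g_s$ is convex on $(0,1/2)$ for every $s \ne 0,1$, and convexity for $s=0, 1$ follows by continuity in $s$.

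Once convexity is established, the theorem follows by taking expectation over $x$ with respect to the unconditional density $p(x)$ and applying Jensen's inequality:
\[
\xi_s(p_1,p_2) = \int p(x)\, g_s(P_e(x))\, dx \ge g_s\!\left(\int p(x) P_e(x)\, dx\right) = g_s(P_e) = \xi_s(P_e).
\]
The main obstacle is the convexity verification, but thanks to the convenient cancellation between $[s(s-1)]^{-1}$ and the derivatives of $a^s$, $(1-a)^s$, this reduces to a short computation. The remaining formulas for $I(P_e)$ and $T(P_e)$ in the statement are just the respective limits $s \to 1$ and $s \to 0$ of $g_s(P_e)$, so no additional argument is needed for those cases.
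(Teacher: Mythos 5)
Your proposal is correct and follows essentially the same route as the paper: the same symmetric substitution $P(C_1\vert x)=P_e(x)$, $P(C_2\vert x)=1-P_e(x)$, the same one-variable function $g_s$, convexity via the second derivative (your $\tfrac{1}{2}\left(\tfrac{1}{2}\right)^{1-s}\left[a^{s-2}+(1-a)^{s-2}\right]=2^{s-2}\left[a^{s-2}+(1-a)^{s-2}\right]$ is algebraically identical to the paper's $IT''_s(a)$), and Jensen's inequality under expectation. The only cosmetic difference is that you dispatch the $s=0,1$ cases by continuity in $s$ — a device the paper itself invokes in Theorem 2.1 — whereas here the paper verifies $I''(a)$ and $T''(a)$ explicitly; both are valid.
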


\begin{proof}  Following the lines of Theorem 2.1, we can write
\begin{equation}
\label{eq12}
\xi _s (x) = \begin{cases}
 {IT_s (x) = \left[ {s(s - 1)} \right]^{ - 1}\left[ {\left( {\frac{P_e (x)^s
+ \left( {1 - P_e (x)} \right)^s}{2}} \right)\left( {\frac{1}{2}} \right)^{1
- s} - \frac{1}{2}} \right],} & {s \ne 0,1} \\
 {I(x) = \frac{1}{2}\left[ {P_e (x)\ln \left( {\frac{2P_e (x)}{2P_e (x) -
1}} \right) + \left( {1 - P_e (x)} \right)\ln \left( {\frac{2\left( {1 - P_e
(x)} \right)}{2P_e (x) - 1}} \right)} \right]} & {s = 1} \\
 {T(x) = \left( {\frac{2P_e (x) - 1}{2}} \right)\ln \left( {\frac{2P_e (x) -
1}{2\sqrt {P_e (x)\left( {1 - P_e (x)} \right)} }} \right),} & {s = 0} \\
\end{cases}.
\end{equation}

Now consider a function
\[
g_s (a) = \begin{cases}
 {IT_s (a) = \left[ {s(s - 1)} \right]^{ - 1}\left[ {\left( {\frac{a^s +
\left( {1 - a} \right)^s}{2}} \right)\left( {\frac{1}{2}} \right)^{1 - s} -
\frac{1}{2}} \right],} & {s \ne 0,1} \\
 {I(a) = \frac{1}{2}\left[ {a\ln \left( {\frac{2a}{2a - 1}} \right) + \left(
{1 - a} \right)\ln \left( {\frac{2\left( {1 - a} \right)}{2a - 1}} \right)}
\right],} & {s = 1} \\
 {T(a) = \left( {\frac{2a - 1}{2}} \right)\ln \left( {\frac{2a - 1}{2\sqrt
{a\left( {1 - a} \right)} }} \right),} & {s = 0} \\
\end{cases},
\]

\noindent
for all $0 < a < \textstyle{1 \over 2}$.

\smallskip
We have
\[
{g}'_s (a) = \begin{cases}
 {IT_s ^\prime (a) = \frac{2^{s - 2}\left[ {a^s(a - 1) + a(1 - a)^s}
\right]}{(s - 1)(a - 1)},} & {s \ne 0,1} \\
 {{I}'(a) = \ln \left( {\frac{a}{1 - a}} \right),} & {s = 1} \\
 {{T}'(a) = \frac{2a - 1}{4a(1 - a)},} & {s = 0} \\
\end{cases}
\]

\noindent and
\[
{g}''_s (a) = \begin{cases}
 {I{T}''_s (a) = 2^{s - 2}\left[ {\frac{a^s(1 - a)^2 + a^2(1 - a)^s}{a^2(1 -
a)^2}} \right],} & {s \ne 0,1} \\
 {{I}''(a) = \frac{1}{2a(1 - a)},} & {s = 1} \\
 {{T}''(a) = \frac{a^2 + (1 - a)^2}{4a^2(1 - a)^2},} & {s = 0} \\
\end{cases}.
\]

This gives ${g}''_s (a) \ge 0$, i.e, $g_s (a)$ is convex function of $a$ ($0
< a < \textstyle{1 \over 2})$ for any $s \in ( - \infty ,\infty )$.

\smallskip
Taking expectation on both sides of (\ref{eq12}) and using Jensen's inequality we
get the required result.
\end{proof}

\begin{corollary}  When $p_1 = p_2 = \frac{1}{2}$, we have
\[
\xi _s \ge \begin{cases}
 {2\left[ {s(s - 1)} \right]^{ - 1}\left[ {\left( {\frac{P_e ^s + \left( {1
- P_e } \right)^s}{2}} \right)\left( {\frac{1}{2}} \right)^{1 - s} -
\frac{1}{2}} \right],} & {s \ne 0,1} \\
 {P_e \ln \left( {\frac{2P_e }{2P_e - 1}} \right) + \,\,\left( {1 - P_e }
\right)\ln \left( {\frac{2\left( {1 - P_e } \right)}{2P_e - 1}} \right),} &
{s = 1} \\
 {\left( {2P_e - 1} \right)\ln \left( {\frac{2P_e - 1}{2\sqrt {P_e \left( {1
- P_e } \right)} }} \right),} & {s = 0} \\
\end{cases}
\]
\end{corollary}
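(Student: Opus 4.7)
The plan is to derive Corollary 2.2 as an immediate specialization of Theorem 2.2 under the equiprobable prior assumption, exploiting the scaling identity already established just before the Bayes-rule computation.

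First I would instantiate Theorem 2.2 at $p_1 = p_2 = \tfrac{1}{2}$, which gives the inequality $\xi_s(\tfrac{1}{2},\tfrac{1}{2}) \ge \xi_s(P_e)$, where $\xi_s(P_e)$ is the explicit expression tabulated in the theorem for the three parameter regimes $s \ne 0,1$, $s=1$, and $s=0$. Next I would invoke the identity
\[
\xi_s\!\left(\tfrac{1}{2},\tfrac{1}{2}\right) = \tfrac{1}{2}\,\xi_s,
\]
which was verified earlier in the paper as a direct computation from the definitions of $\xi_s(p_1,p_2)$ and $\xi_s$. Substituting this into Theorem 2.2 yields $\tfrac{1}{2}\xi_s \ge \xi_s(P_e)$, i.e.\ $\xi_s \ge 2\,\xi_s(P_e)$.

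Finally I would read off the three cases. For $s \ne 0,1$ this gives $\xi_s \ge 2\,IT_s(P_e) = 2[s(s-1)]^{-1}\bigl[(\tfrac{P_e^s + (1-P_e)^s}{2})(\tfrac{1}{2})^{1-s} - \tfrac{1}{2}\bigr]$. For $s = 1$ the factor of $2$ cancels the explicit $\tfrac{1}{2}$ in $I(P_e)$, producing the stated expression without the prefactor. For $s = 0$ the factor of $2$ absorbs the $\tfrac{1}{2}$ in $T(P_e)=\bigl(\tfrac{2P_e-1}{2}\bigr)\ln(\cdots)$ and yields $(2P_e-1)\ln(\cdots)$, matching the corollary exactly.

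Since each step is a literal substitution, there is no real obstacle: the only thing to watch is that the scaling identity $\xi_s(\tfrac12,\tfrac12) = \tfrac12 \xi_s$ is applied consistently across all three cases $s\ne 0,1$, $s=1$, $s=0$, so that the factor of $2$ lands correctly on each piecewise branch of the right-hand side.
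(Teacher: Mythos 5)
Your proposal is correct and is precisely the derivation the paper intends: the corollary is stated without proof as an immediate consequence of Theorem 2.2 combined with the scaling identity $\xi_s\left(\frac{1}{2},\frac{1}{2}\right) = \frac{1}{2}\,\xi_s$ verified earlier in Section 2, yielding $\xi_s \ge 2\,\xi_s(P_e)$ and the three branches exactly as you compute them. Your bookkeeping of the factor of $2$ across the cases $s \ne 0,1$, $s=1$, and $s=0$ matches the stated right-hand sides, so there is nothing to add.
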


\section{$f-$Divergence and Probability of Error}

Csisz\'{a}r \cite{csi} have given a measure for the divergence between two
probability density functions, say $p(x)$ and $q(x)$. This so called $f -$divergence given by
\begin{equation}
\label{eq13}
C_f (p,q) = \int\limits_{\rm X} {f\left( {\frac{p(x)}{q(x)}} \right)q(x)dx} .
\end{equation}

The function $f(x)$, with $x \in (0,\infty )$ is a convex function which has
to satisfy the conditions
\begin{equation}
\label{eq14}
f(0) = \mathop {\lim }\limits_{u \downarrow 0} f(u);
\,
0f\left( {\frac{0}{0}} \right) = 0 \, ;
\,
0f\left( {\frac{a}{0}} \right) = \mathop {\lim }\limits_{ \in \downarrow
\infty } \in f\left( {\frac{a}{ \in }} \right) = a\mathop {\lim }\limits_{x
\to \infty } \frac{f(u)}{u}.
\end{equation}

It can be easily checked that $C_f (p,q) \ge f(1)$ and that $C_f (p,q) =
f(1)$ only when $p(x) = q(x)$ a.e. Thus, $C_f (p,q) - f(1)$ is a distance or
divergence measure in the sense that $C_f (p,q) - f(1) \ge 0$. However, it
is not symmetric in $p$ and $q$ and in general does not satisfy triangle
inequality. Some applications of $f-$divergence in connection with divergence
measures are given in Taneja and Kumar \cite{tak}.

\smallskip
Boekee and Van der Lubbe \cite{bov} have introduced the average $f -
$divergence between two hypothesis $C_1 $ and $C_2 $ in terms of their ``a
posteriori'' probabilities. This average $f - $divergence is defined as
\begin{equation}
\label{eq15}
\overline C _f (C_1 ,C_2 ) = \int\limits_{\rm X} {f\left( {\frac{P(C_1 \vert
x)}{P(C_2 \vert x)}} \right)P(C_1 \vert x)p(x)dx}
 = E_X \left\{ {f\left( {\frac{P(C_1 \vert x)}{P(C_2 \vert x)}} \right)P(C_1
\vert x)} \right\}.
\end{equation}

If we introduce the function $f_ * (u) = u\,f\left( {\frac{1 - u}{u}} \right)$ and set $u = u(x) = P(C_2 \vert x)$, it is easy to see from $P(C_1 \vert x) = 1 - P(C_2 \vert x)$ that
\[
\overline C _f (C_1 ,C_2 ) = E_X \left\{ {f_ * \left( {\frac{P(C_1 \vert
x)}{P(C_2 \vert x)}} \right)P(C_1 \vert x)} \right\}
\]

From Vajda \cite{vaj} it follows that $f_ * (u)$ is convex on $[0,1]$ and is strictly convex iff $f(u)$ is strictly convex.

\subsection{A Class of Upper Bounds}

In \cite{bov} it has been shown that the Bayesian probability of error can be upper bounds in terms of the average $f - $divergence $\overline C _f (C_1, C_2 )$. This upper bound is given by
\begin{equation}
\label{eq16}
P_e \le \frac{f_0 P(C_2 ) + f_\infty P(C_1 ) - \overline C _f (C_1 ,C_2
)}{f_2 - f_1 },
\end{equation}

\noindent
where $f_2 $ should be finite with $f_0 = \mathop {\lim }\limits_{u \downarrow \infty } f(u); \,
f_1 = f(1); \, f_2 = f_0 + f_\infty ; \, f_\infty = \mathop {\lim }\limits_{u \to \infty } \frac{f(u)}{u}.
$

\smallskip
The above bound is valid only for \textit{every convex function} $f(u)$ which satisfies the conditions
given in (\ref{eq14}). However, if $f_ * (u) = u\,f\left( {\frac{1 - u}{u}} \right)$ is symmetric with respect to $u = \frac{1}{2}$ i.e., $f_ * (u) = f_ * (1 - u)$, $\forall u \in (0,1)$, the bound (\ref{eq16}) can be written in a simpler form given by
\begin{equation}
\label{eq17}
P_e \le \frac{1}{2f_\infty - f_1 }\left[ {f_\infty - \overline C _f (C_1
,C_2 )} \right],
\end{equation}

\noindent
provided $f_\infty $ is finite and $f_ * (u) = f_ * (1 - u)$, $\forall u \in
(0,1)$.

\smallskip
As a consequence of result (\ref{eq17}) we have the following two theorems in terms
of measures given in (\ref{eq1}) and (\ref{eq4}) respectively.

\begin{theorem} The following bound holds
\begin{equation}
\label{eq18}
P_e \le \frac{1}{2}\left[ {1 + s(s - 1)\overline \zeta _s (C_1 ,C_2 )}
\right],
\,
0 < s < 1,
\end{equation}

\noindent where
\[
\overline \zeta _s (C_1 ,C_2 ) = E_X \left\{ {f_{_{\zeta _s } }^ * \left(
{\frac{P(C_1 \vert x)}{P(C_2 \vert x)}} \right)P(C_1 \vert x)} \right\}
\]

\noindent with
\[
f_{_{\zeta _s } }^ * (x) = \begin{cases}
 {\left[ {s(s - 1)} \right]^{ - 1}\left[ {(1 - x)^{1 - s}x^s + x^{1 - s}(1 -
x)^s - 1} \right],} & {s \ne 0,1} \\
 {(2x - 1)\ln \left( {\frac{x}{1 - x}} \right)} & {s = 0,1} \\
\end{cases}
\]

\noindent
for all$\;x \in (0,1)$.
\end{theorem}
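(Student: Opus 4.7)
The plan is to realize $\overline{\zeta}_s(C_1,C_2)$ as an average $f$-divergence of the form (\ref{eq15}) and then invoke the symmetric version of the Boekee--Van der Lubbe bound (\ref{eq17}). The natural candidate is
\[
f(u)=[s(s-1)]^{-1}\bigl(u^s+u^{1-s}-u-1\bigr),\qquad u\in(0,\infty),
\]
obtained from the usual $f$-function for $J_s$ by adding the harmless linear term $-[s(s-1)]^{-1}(u-1)$, which is chosen precisely so that $f$ satisfies the reflection identity $f(u)=u\,f(1/u)$; this identity is exactly what one needs to force $f_*(u)=u\,f((1-u)/u)$ to be symmetric about $u=\tfrac12$.

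First I would verify convexity by computing
\[
f''(u)=u^{s-2}+u^{-s-1}>0,\qquad u>0,
\]
so $f$ is strictly convex on $(0,\infty)$, and (\ref{eq14}) holds because $f$ is a sum of power functions extended continuously. Next I would compute
\[
f_*(u)=u\,f\!\left(\frac{1-u}{u}\right)=[s(s-1)]^{-1}\bigl[(1-u)^{1-s}u^{s}+u^{1-s}(1-u)^{s}-1\bigr],
\]
which is exactly $f^*_{\zeta_s}(u)$ of the statement, so $\overline{C}_f(C_1,C_2)=\overline{\zeta}_s(C_1,C_2)$. A direct substitution $u\mapsto 1-u$ confirms $f_*(u)=f_*(1-u)$ on $(0,1)$, i.e.\ the symmetry hypothesis behind (\ref{eq17}) is satisfied.

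Next I would evaluate the three boundary constants attached to (\ref{eq17}). For $0<s<1$ all power terms of negative exponent vanish at $\infty$ and all positive-exponent power terms vanish at $0$, yielding
\[
f_1=f(1)=0,\qquad f_\infty=\lim_{u\to\infty}\frac{f(u)}{u}=\frac{-1}{s(s-1)}=\frac{1}{s(1-s)},
\]
and similarly $f_0=1/[s(1-s)]$ (not actually needed once symmetry is used). Plugging into (\ref{eq17}),
\[
P_e\le\frac{1}{2f_\infty-f_1}\bigl[f_\infty-\overline{\zeta}_s(C_1,C_2)\bigr]
=\frac{s(1-s)}{2}\left[\frac{1}{s(1-s)}-\overline{\zeta}_s(C_1,C_2)\right]
=\frac{1}{2}\bigl[1+s(s-1)\overline{\zeta}_s(C_1,C_2)\bigr],
\]
which is (\ref{eq18}).

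The only genuinely delicate step is the choice of $f$: the ``textbook'' generator $[s(s-1)]^{-1}(u^s+u^{1-s}-2)$ for $J_s$ gives $f_\infty=0$ and makes the denominator of (\ref{eq17}) collapse, so one must exploit the gauge freedom $f\leadsto f+c(u-1)$ (which leaves $C_f$ unchanged) and pick the unique representative satisfying $f(u)=u\,f(1/u)$. Once that representative is identified, the remainder of the argument is a routine verification of convexity, symmetry of $f_*$, and evaluation of $f_0,f_1,f_\infty$, followed by the substitution into (\ref{eq17}).
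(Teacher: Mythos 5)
Your proposal is correct and follows essentially the same route as the paper: you pick the same generator $f_{J_s}(u)=[s(s-1)]^{-1}\left[u^s+u^{1-s}-(u+1)\right]$, observe the symmetry $f_*(u)=f_*(1-u)$, evaluate $f_1=0$ and $f_\infty=-[s(s-1)]^{-1}$ for $0<s<1$, and substitute into (\ref{eq17}). The only differences are cosmetic improvements on your part: you verify convexity by direct computation of $f''$ where the paper cites Taneja \cite{tan4}, and you explicitly motivate the choice of generator via the gauge freedom $f\mapsto f+c(u-1)$ and the reflection identity $f(u)=u\,f(1/u)$, which the paper leaves implicit.
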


\begin{proof} Let us consider
\[
\overline \zeta _s (C_1 ,C_2 ) = E_X \left\{ {f_{_{\zeta _s } }^ * \left(
{\frac{P(C_1 \vert x)}{P(C_2 \vert x)}} \right)P(C_1 \vert x)} \right\},
\]

\noindent where
\begin{equation}
\label{eq19}
f_{\zeta _s }^ * (x) = \begin{cases}
 {x\,f_{J_s } \left( {\frac{1 - x}{x}} \right) = \frac{(1 - x)^{1 - s}x^s +
x^{1 - s}(1 - x)^s - 1}{s(s - 1)},} & {s \ne 0,1} \\
 {x\,f_J \left( {\frac{1 - x}{x}} \right) = (2x - 1)\ln \left( {\frac{x}{1 -
x}} \right),} & {s = 0,1} \\
\end{cases}
\end{equation}

\noindent with
\begin{equation}
\label{eq20}
f_{\zeta _s } (x) = \begin{cases}
 {f_{J_s } (x) = \left[ {s(s - 1)} \right]^{ - 1}\left[ {x^s + x^{1 - s} -
(x + 1)} \right],} & {s \ne 0,1} \\
 {f_J (x) = (x - 1)\ln x} & {s = 0,1} \\
\end{cases}
\end{equation}

\noindent
for all $x \in (0,1)$.

\smallskip
The convexity of the function $f_{\zeta _s } (x)$ given (\ref{eq20}) can be seen in
Taneja \cite{tan4}. From the expression (\ref{eq19}), we observe that
\begin{equation}
\label{eq21}
f_{\zeta _s }^ * (x) = f_{\zeta _s }^ * (1 - x).
\end{equation}

Also, we have
\begin{equation}
\label{eq22}
f_{\zeta _\infty } = \mathop {\lim }\limits_{x \to \infty } \frac{f_\zeta
(x)}{x} = \begin{cases}
 {\frac{ - 1}{s(s - 1)},} & {0 < s < 1} \\
 \infty & {\mbox{otherwise}} \\
\end{cases}
\end{equation}

\noindent and
\begin{equation}
\label{eq23}
f_\zeta (1) = 0.
\end{equation}

Expression (\ref{eq17}) together with (\ref{eq21})-(\ref{eq23}) give the required result (\ref{eq18}).
\end{proof}

\begin{theorem} The following bound holds
\begin{equation}
\label{eq24}
P_e \le \frac{1}{2}\left[ {1 - \frac{2s(s - 1)}{2^{ - s} - 1}\overline \xi
_s (C_1 ,C_2 )} \right], \, s < 1
\end{equation}

\noindent where
\[
\overline \xi _s (C_1 ,C_2 ) = E_X \left\{ {f_{\xi _s }^ * \left(
{\frac{P(C_1 \vert x)}{P(C_2 \vert x)}} \right)P(C_1 \vert x)} \right\}
\]

\noindent with
\[
f_{\xi _s }^ * (x) = \begin{cases}
 {f_{IT_s }^\ast (x) = \left[ {2s(s - 1)} \right]^{ - 1}\left[ {2^{ -
s}\left( {(1 - x)^{1 - s} + x^{1 - s}} \right) - 1} \right],} & {s \ne 0,1}
\\
 {f_I^\ast (x) = \frac{1}{2}\left[ {\ln 2 + x\ln x + (1 - x)\ln (1 - x)}
\right],} & {s = 0} \\
 {f_T^\ast (x) = \frac{1}{2}\ln \left( {\frac{1}{2\sqrt {x(1 - x)} }}
\right),} & {s = 1} \\
\end{cases}
\]

\noindent for all$\;x \in (0,1)$.
\end{theorem}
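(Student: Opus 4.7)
The plan is to imitate precisely the argument used for Theorem 3.1, now applied to the arithmetic--geometric generator $f_{\xi_s}$. The key observation is that bound (\ref{eq17}) applies whenever the associated $f_*$ is symmetric about $1/2$, $f_\infty$ is finite, and $f(1)$ is known; so the whole proof reduces to (i) identifying the correct generator, (ii) verifying these three hypotheses, and (iii) substituting.

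First I would introduce the convex generator $f_{\xi_s}(x)$ on $(0,\infty)$ for which $\overline{C}_{f_{\xi_s}}(C_1,C_2) = \overline{\xi}_s(C_1,C_2)$. Reading off $IT_s(P\|Q) = \sum_i q_i\, f_{\xi_s}(p_i/q_i)$ from (\ref{eq4}) gives the natural candidate, normalized so that $f_{\xi_s}(1)=0$; its convexity on $(0,\infty)$ is already recorded in Taneja \cite{tan2,tan3} and would simply be cited. Then I would set $f_{\xi_s}^*(x) = x\,f_{\xi_s}\!\bigl((1-x)/x\bigr)$ and simplify, exploiting the identities $(1-x)/x + 1 = 1/x$ and $((1-x)/x)^s + 1 = ((1-x)^s + x^s)/x^s$. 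This is a mechanical calculation analogous to the derivation of $f_{\zeta_s}^*$ in (\ref{eq19}), and it should reproduce the closed-form expression for $f_{\xi_s}^*$ stated in the theorem, with the cases $s=0$ and $s=1$ handled as limits.

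The symmetry $f_{\xi_s}^*(x) = f_{\xi_s}^*(1-x)$ for $x \in (0,1)$ is then visible from the resulting formula (and could also be anticipated from the fact that $\xi_s$ treats its two arguments symmetrically, exactly as in (\ref{eq21})). Next I would compute $f_{\xi_s,\infty} := \lim_{x\to\infty} f_{\xi_s}(x)/x$ by identifying the leading $x^s$ and $x^{1-s}$ behaviour of $f_{\xi_s}$; the hypothesis $s<1$ in the theorem is precisely what is required to keep this limit finite, and the $2^{-s}-1$ appearing in the denominator of (\ref{eq24}) is the trace of $f_{\xi_s,\infty}$ together with the normalization $[s(s-1)]^{-1}$. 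Finally I would substitute $f_{\xi_s}(1)=0$ and the value of $f_{\xi_s,\infty}$ into (\ref{eq17}) and simplify.

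The main obstacle is the careful bookkeeping of powers of $2$ and of the factor $s(s-1)$, so that the simplification in the final step yields precisely the coefficient $-2s(s-1)/(2^{-s}-1)$ demanded by (\ref{eq24}). The boundary cases $s=0$ (the measure $T$) and $s=1$ (the measure $I$) require separate verification of convexity and of the value of $f_\infty$ for $f_T$ and $f_I$, but both follow by continuity in the parameter $s$, exactly as the case $s=0,1$ for $J$ was handled in Theorem 3.1.
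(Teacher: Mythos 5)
Your outline is essentially the paper's own proof: the paper cites Taneja for the convexity of the generator, forms $f_{\xi_s}^*(x)=x\,f_{\xi_s}\!\left(\frac{1-x}{x}\right)$ in (\ref{eq25}), records the symmetry (\ref{eq27}), computes $f_{\xi_\infty}$ explicitly in (\ref{eq28})--(\ref{eq30}) (finite precisely when $s<1$, via $\lim_{x\to\infty}(x+1)^s/x$ in (\ref{eq29})), notes $f_\xi(1)=0$, and substitutes into (\ref{eq17}). Two details in your plan, however, would trip you up as written. First, the generator you propose to ``read off'' from (\ref{eq4}) is $f(u)=[s(s-1)]^{-1}\left[\left(\frac{u^s+1}{2}\right)\left(\frac{u+1}{2}\right)^{1-s}-\frac{u+1}{2}\right]$, and the mechanical substitution $u=(1-x)/x$ then produces $[2s(s-1)]^{-1}\left[2^{s-1}\left((1-x)^s+x^s\right)-1\right]$ with $f_\infty=[2s(s-1)]^{-1}\left(2^{s-1}-1\right)$, finite for $s>0$ --- not the stated $f_{\xi_s}^*$ and not the stated condition $s<1$. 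The paper's generator (\ref{eq26}) carries $x^{1-s}$ and $\left(\frac{x+1}{2}\right)^s$, i.e.\ the parametrization is swapped $s\leftrightarrow 1-s$ relative to (\ref{eq4}); this is also why in the theorem's cases $s=0$ gives $f_I^*$ and $s=1$ gives $f_T^*$, the reverse of (\ref{eq9}), and the reverse of the labels in your last sentence. So you must either start from (\ref{eq26}) or relabel $s\mapsto 1-s$ at the end; otherwise your ``bookkeeping of powers of $2$'' will appear not to close. Second, your claim that the boundary case $s=1$ ``follows by continuity'' is wrong for the bound itself: by (\ref{eq30}) (and Example 3.4) $f_{\xi_\infty}=\infty$ at $s=1$, which is exactly why (\ref{eq24}) is restricted to $s<1$. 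Only $s=0$ is a genuine continuity case, where $\lim_{s\to 0}[2s(s-1)]^{-1}\left(2^{-s}-1\right)=\frac{1}{2}\ln 2$ agrees with the value listed in (\ref{eq30}).
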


\begin{proof} Let us write
\[
\overline \xi _s (C_1 ,C_2 ) = E_X \left\{ {f_{\xi _s }^ * \left(
{\frac{P(C_1 \vert x)}{P(C_2 \vert x)}} \right)P(C_1 \vert x)} \right\},
\]

\noindent where
\begin{equation}
\label{eq25}
f_{\xi _s }^ * (x) = \begin{cases}
 {x\,f_{IT_s } \left( {\frac{1 - x}{x}} \right) = \frac{2^{ - s}\left( {(1 -
x)^{1 - s} + x^{1 - s}} \right) - 1}{2s(s - 1)},} & {s \ne 0,1} \\
 {x\,f_I \left( {\frac{1 - x}{x}} \right) = \frac{1}{2}\left[ {\ln 2 + x\ln
x + (1 - x)\ln (1 - x)} \right],} & {s = 0} \\
 {x\,f_T \left( {\frac{1 - x}{x}} \right) = \frac{1}{2}\ln \left(
{\frac{1}{2\sqrt {x(1 - x)} }} \right),} & {s = 1} \\
\end{cases}
\end{equation}

\noindent with
\begin{equation}
\label{eq26}
f_{\xi _s } (x) = \begin{cases}
 {f_{IT_s } (x) = \left[ {s(s - 1)} \right]^{ - 1}\left[ {\left( {\frac{x^{1
- s} + 1}{2}} \right)\left( {\frac{x + 1}{2}} \right)^s - \left( {\frac{x +
1}{2}} \right)} \right],} & {s \ne 0,1} \\
 {f_I (x) = \frac{x}{2}\ln x + \left( {\frac{x + 1}{2}} \right)\ln \left(
{\frac{x + 1}{2}} \right),} & {s = 0} \\
 {f_T (x) = \frac{x + 1}{2}\ln x\left( {\frac{x + 1}{2\sqrt x }} \right),} &
{s = 1} \\
\end{cases}
\end{equation}

\noindent
for all $\;x \in (0,1)$.

\smallskip
The convexity of the function $f_{\xi _s } (x)$ given (\ref{eq26}) can be seen in
Taneja \cite{tan3}. From the expression (\ref{eq25}) we observe that
\begin{equation}
\label{eq27}
f_{\xi _s }^ * (x) = f_{\xi _s }^ * (1 - x).
\end{equation}

Let us calculate now $f_{\xi _\infty } $. We have
\begin{align}
\label{eq28}
\mathop {\lim }\limits_{x \to \infty } \frac{f_{IT_s } (x)}{x}  &= \mathop
{\lim }\limits_{x \to \infty } \left[ {s(s - 1)} \right]^{ -
1}\frac{1}{x}\left[ {\left( {\frac{x^{1 - s} + 1}{2}} \right)\left( {\frac{x
+ 1}{2}} \right)^s - \left( {\frac{x + 1}{2}} \right)} \right]\notag\\
& = \mathop {\lim }\limits_{x \to \infty } \left[ {s(s - 1)} \right]^{ -
1}\left[ {\left( {\frac{x^{1 - s} + 1}{2x}} \right)\left( {\frac{x + 1}{2}}
\right)^s - \left( {\frac{1}{2} + \frac{1}{x}} \right)} \right] \notag\\
 &= \mathop {\lim }\limits_{x \to \infty } \left[ {s(s - 1)} \right]^{ -
1}\left[ {2^{ - 1 - s}\left( {\frac{x^{1 - s} + 1}{x}} \right)\left( {x + 1}
\right)^s - \left( {\frac{1}{2} + \frac{1}{x}} \right)} \right] \notag\\
& = \mathop {\lim }\limits_{x \to \infty } \left[ {s(s - 1)} \right]^{ -
1}\left[ {2^{ - 1 - s}\left( {\left( {\frac{x + 1}{x}} \right)^s + \frac{(x
+ 1)^s}{x}} \right) - \left( {\frac{1}{2} + \frac{1}{x}} \right)} \right] \notag\\
& = \left[ {s(s - 1)} \right]^{ - 1}\left[ {2^{ - 1 - s}\left( {\mathop {\lim
}\limits_{x \to \infty } \left( {1 + \frac{1}{x}} \right)^s + \mathop {\lim
}\limits_{x \to \infty } \frac{(x + 1)^s}{x}} \right) - \mathop {\lim
}\limits_{x \to \infty } \left( {\frac{1}{2} + \frac{1}{x}} \right)} \right] \notag \\
& = \left[ {2s(s - 1)} \right]^{ - 1}\left[ {2^{ - s}\left( {1 + \mathop
{\lim }\limits_{x \to \infty } \frac{(x + 1)^s}{x}} \right) - 1} \right],\;s
\ne 0,\,1.
 \end{align}

We can easily verify that
\begin{equation}
\label{eq29}
\mathop {\lim }\limits_{x \to \infty } \frac{(x + 1)^s}{x}\begin{cases}
 {0,} & {s < 1} \\
 {1,} & {s = 1} \\
 {\infty ,} & {s > 1} \\
\end{cases}.
\end{equation}

Expressions (\ref{eq28}) and (\ref{eq29}) together give
\begin{equation}
\label{eq30}
f_{\xi _\infty } = \mathop {\lim }\limits_{x \to \infty } \frac{f_\xi
(x)}{x} = \begin{cases}
 {\left[ {2s(s - 1)} \right]^{ - 1}\left[ {2^{ - s} - 1} \right],} & {s <
1,\;s \ne 0} \\
 {\frac{1}{2}\ln 2,} & {s = 0} \\
 {\infty ,} & {s \ge 1} \\
\end{cases}
\end{equation}

Also we have
\begin{equation}
\label{eq31}
f_\xi (1) = 0.
\end{equation}

Expression (\ref{eq17}) together with (\ref{eq27}), (\ref{eq30}) and (\ref{eq31}) give the required
result (\ref{eq24}).
\end{proof}

\smallskip
Here below are the particular cases of the Theorems 3.1 and 3.2. These can
also be obtained directly using result (\ref{eq17}). These examples are the seven
measures appearing (\ref{eq7}).

\begin{example} Let us consider the measure
\[
\overline J (C_1 ,C_2 ) = E_X \left\{ {f_{_J }^ * \left( {\frac{P(C_1 \vert
x)}{P(C_2 \vert x)}} \right)P(C_1 \vert x)} \right\},
\]

\noindent where
\[
f_J^ * (x) = (2x - 1)\ln \left( {\frac{x}{1 - x}} \right),
\,
\forall x \in (0,1).
\]

We can't obtain the upper bound on the probability of error in terms of
\textit{J-divergence}, since according to (\ref{eq22}), $f_{J_\infty } $ is infinite.
\end{example}

\begin{example}  In particular for $s = \frac{1}{2}$ in (\ref{eq18}), we have
the following upper bound on the probability of error in terms of the
\textit{Hellinger distance}:
\[
P_e \le \frac{1}{2}\left[ {1 - 2\,\overline h (C_1 ,C_2 )} \right],
\]

\noindent where
\[
\overline h (C_1 ,C_2 ) = E_X \left\{ {f_h^ * \left( {\frac{P(C_1 \vert
x)}{P(C_2 \vert x)}} \right)P(C_1 \vert x)} \right\},
\]

\noindent with
\[
f_h^ * (x) = \frac{1 - 2\sqrt {x(1 - x)} }{2},
\,
\forall x \in (0,1).
\]
\end{example}

\begin{example}  In particular for $s = 0$ in (\ref{eq24}), we have the
following upper bound on the probability of error in terms of the
\textit{Jensen-Shannon divergence}:
\[
P_e \le \frac{1}{2}\left[ {1 - \frac{2}{\ln 2}\overline I (C_1 ,C_2 )}
\right],
\]

\noindent where
\[
\overline I (C_1 ,C_2 ) = E_X \left\{ {f_I^ * \left( {\frac{P(C_1 \vert
x)}{P(C_2 \vert x)}} \right)P(C_1 \vert x)} \right\},
\]

\noindent with
\[
f_I^ * (x) = \frac{1}{2}\left[ {\ln 2 + x\ln x + (1 - x)\ln (1 - x)}
\right],
\,
\forall x \in (0,1).
\]
\end{example}

\begin{example}  Let us consider the measure
\[
\overline T (C_1 ,C_2 ) = E_X \left\{ {f_{_T }^ * \left( {\frac{P(C_1 \vert
x)}{P(C_2 \vert x)}} \right)P(C_1 \vert x)} \right\},
\]

\noindent where
\[
f_T^ * (x) = \frac{1}{2}\ln \left( {\frac{1}{2\sqrt {x(1 - x)} }} \right),
\,
\forall x \in (0,1).
\]

We can't obtain the upper bound on the probability of error in terms of the \textit{arithmetic and geometric mean divergence,} since according to (\ref{eq30}), $f_{T_\infty } $ is infinite in this case.
\end{example}

\begin{example}  In particular for $s = \frac{1}{2}$ in (\ref{eq24}), we have
the following upper bound on the probability of error in terms of \textit{d-divergence}:
\[
P_e \le \frac{1}{2}\left[ {1 - \left( {\frac{4}{2 - \sqrt 2 }}
\right)\overline d (C_1 ,C_2 )} \right],
\]

\noindent where
\[
\overline d (C_1 ,C_2 ) = E_X \left\{ {f_d^ * \left( {\frac{P(C_1 \vert
x)}{P(C_2 \vert x)}} \right)P(C_1 \vert x)} \right\},
\]

\noindent with
\[
f_d^ * (x) = \frac{1}{2} - \sqrt 2 \left( {\sqrt x + \sqrt {1 - x} }
\right),
\,
\forall x \in (0,1).
\]
\end{example}

\begin{example}  In particular for $s = - 1$ in (\ref{eq24}), we have the
following upper bound on the probability of error in terms of \textit{triangular discrimination}:
\[
P_e \le \frac{1}{2}\left[ {1 - \overline \Delta (C_1 ,C_2 )} \right],
\]

\noindent where
\[
\overline \Delta (C_1 ,C_2 ) = E_X \left\{ {f_\Delta ^ * \left( {\frac{P(C_1
\vert x)}{P(C_2 \vert x)}} \right)P(C_1 \vert x)} \right\}
\]

\noindent with
\[
f_\Delta ^ * (x) = \left( {2x - 1} \right)^2,
\,
\forall x \in (0,1).
\]
\end{example}

\begin{example}  Let us consider the measure
\[
\overline \Psi (C_1 ,C_2 ) = E_X \left\{ {f_{_\Psi }^ * \left( {\frac{P(C_1
\vert x)}{P(C_2 \vert x)}} \right)P(C_1 \vert x)} \right\},
\]

\noindent where
\[
f_\Psi ^ * (x) = \frac{\left( {2x - 1} \right)^2}{x(1 - x)},
\,
\forall x \in (0,1).
\]

We can't obtain the upper bound on the probability of error in terms of \textit{symmetric} $\chi ^2 - $\textit{divergence}, since according to (\ref{eq23}), $f_{\Psi _\infty } $is infinite for $s = 2$.
\end{example}

\section{Difference of Divergence Measures}

From the Examples 3.1-3.7 we observe that there are only four measures that give upper bounds on the probability of error. According to (\ref{eq1}), these four measures bear the following inequality:
\begin{equation}
\label{eq32}
\frac{1}{4}\Delta (P\vert \vert Q) \le I(P\vert \vert Q) \le h(P\vert \vert
Q) \le 4\,d(P\vert \vert Q).
\end{equation}

The inequality (\ref{eq32}) admits the following nonnegative differences
\begin{equation}
\label{eq33}
D_{d\Delta } (P\vert \vert Q) = 4\,d(P\vert \vert Q) - \frac{1}{4}\Delta
(P\vert \vert Q),
\end{equation}
\begin{equation}
\label{eq34}
D_{dh} (P\vert \vert Q) = 4\,d(P\vert \vert Q) - h(P\vert \vert Q),
\end{equation}
\begin{equation}
\label{eq35}
D_{dI} (P\vert \vert Q) = 4\,d(P\vert \vert Q) - I(P\vert \vert Q),
\end{equation}
\begin{equation}
\label{eq36}
D_{hI} (P\vert \vert Q) = h(P\vert \vert Q) - I(P\vert \vert Q),
\end{equation}
\begin{equation}
\label{eq37}
D_{h\Delta } (P\vert \vert Q) = h(P\vert \vert Q) - \frac{1}{4}\Delta
(P\vert \vert Q)
\end{equation}

\noindent and
\begin{equation}
\label{eq38}
D_{I\Delta } (P\vert \vert Q) = I(P\vert \vert Q) - \frac{1}{4}\Delta
(P\vert \vert Q).
\end{equation}

The above six measures can be related by the following (ref. Taneja \cite{tan6}):
\begin{align}
\label{eq39}
D_{I\Delta } (P\vert \vert Q) &\le \frac{2}{3}D_{h\Delta } (P\vert \vert Q)
\le \frac{8}{15}D_{d\Delta } (P\vert \vert Q) \le\notag\\
 \le & \frac{8}{3}D_{dh} (P\vert \vert Q) \le \frac{8}{7}D_{dI} (P\vert \vert
Q) \le 2\,D_{hI} (P\vert \vert Q).
\end{align}

Here below we shall give bounds on the probability of error in terms of the
measures (\ref{eq33})-(\ref{eq38}).

\begin{result} The following bound holds
\begin{equation}
\label{eq40}
P_e \le \frac{1}{2}\left[ {1 - \frac{4}{7 - 4\sqrt 2 }\overline D _{d\Delta
} (C_1 ,C_2 )} \right],
\end{equation}

\noindent where
\[
\overline D _{d\Delta } (C_1 ,C_2 ) = E_X \left\{ {f_{d\Delta }^ * \left(
{\frac{P(C_1 \vert x)}{P(C_2 \vert x)}} \right)P(C_1 \vert x)} \right\}
\]

\noindent with
\[
f_{d\Delta }^ * (x) = \frac{7}{4} - \sqrt 2 \left( {\sqrt x + \sqrt {1 - x}
} \right) + x(1 - x),
\,
\forall x \in (0,1).
\]
\end{result}

\begin{proof} Let us consider
\[
\overline D _{d\Delta } (C_1 ,C_2 ) = E_X \left\{ {f_{d\Delta }^ * \left(
{\frac{P(C_1 \vert x)}{P(C_2 \vert x)}} \right)P(C_1 \vert x)} \right\},
\]

\noindent where
\[
f_{d\Delta }^ * (x) = x\,f_{d\Delta } \left( {\frac{1 - x}{x}} \right)
 = \frac{7}{4} - \sqrt 2 \left( {\sqrt x + \sqrt {1 - x} } \right) + x(1 -
x),
\]

\noindent with
\[
f_{d\Delta } (x) = 2x + 2 - \left( {\sqrt x + 1} \right)\sqrt {2x + 2} -
\frac{(x - 1)^2}{4(x + 1)^2}.
\]

Now we shall prove the convexity of the function $f_{d\Delta } (x)$. We have
\[
{f}'_{d\Delta } (x) = 2 - \frac{\sqrt {2x + 2} }{2\sqrt x } - \frac{1 +
\sqrt x }{\sqrt {2x + 2} } - \frac{x - 1}{2(x + 1)^2} + \frac{(x - 1)^2}{4(x
+ 1)^2}
\]

\noindent and
\begin{align}
\label{eq41}
{f}''_{d\Delta } (x) & = \frac{(x^{3 / 2} + 1)(x + 1)^2 - 4x^{3 / 2}\sqrt {2x
+ 2} }{2x^{3 / 2}(x + 1)^3\sqrt {2x + 2} }\notag\\
&  = \frac{8}{2x^{3 / 2}(x + 1)^3\sqrt {2x + 2} }\sqrt {\frac{x + 1}{2}}
\left[ {\left( {\frac{x^{3 / 2} + 1}{2}} \right)\left( {\frac{x + 1}{2}}
\right)^{3 / 2} - x^{3 / 2}} \right], \, \forall x > 0.
\end{align}

Now we shall prove the non-negativity of the expression (\ref{eq41}). We can
easily check that
\begin{equation}
\label{eq42}
\left( {\frac{\sqrt x + 1}{2}} \right)^3 \le \frac{x^{3 / 2} + 1}{2}.
\end{equation}

On the other side we know that \cite{tan3} pp. 279:
\begin{equation}
\label{eq43}
x^{3 / 2} \le \left( {\frac{\sqrt x + 1}{2}} \right)^3\left( {\frac{x +
1}{2}} \right)^{3 / 2}.
\end{equation}

Expressions (\ref{eq42}) and (\ref{eq43}) together give
\begin{equation}
\label{eq44}
x^{3 / 2} \le \left( {\frac{\sqrt x + 1}{2}} \right)^3\left( {\frac{x +
1}{2}} \right)^{3 / 2} \le \left( {\frac{x^{3 / 2} + 1}{2}} \right)\left(
{\frac{x + 1}{2}} \right)^{3 / 2}.
\end{equation}

Expression (\ref{eq44}) proves the non-negativity of the expression (\ref{eq41}), i.e.,
${f}''_{d\Delta } (x) \ge 0$, $\forall x > 0$. This proves the convexity of
the function $f_{d\Delta } (x)$, $\forall x > 0$.

\smallskip
We have
\begin{align}
\label{eq45}
f_{d\Delta }^ * (x) & = f_{d\Delta }^ * (1 - x), \, x \in (0,1),\\
\label{eq46}
f_{\left( {d\Delta } \right)_\infty } & = \mathop {\lim }\limits_{x \to \infty
} \frac{f_{d\Delta } (x)}{x} = \frac{7}{4} - \sqrt 2 = \frac{7 - 4\sqrt 2
}{4}
\intertext{and}
\label{eq47}
f_1 &= f_{d\Delta } (1) = 0.
\end{align}

The result (\ref{eq17}) together with the expressions (\ref{eq45})-(\ref{eq47}) gives the
required result (\ref{eq40}).
\end{proof}

\begin{result}The following bound holds
\begin{equation}
\label{eq48}
P_e \le \frac{1}{2}\left[ {1 - \frac{2}{4 - 2\sqrt 2 - \ln 2}\,\overline D
_{dI} (C_1 ,C_2 )} \right],
\end{equation}

\noindent where
\[
\overline D _{dI} (C_1 ,C_2 ) = E_X \left\{ {f_{dI}^ * \left( {\frac{P(C_1
\vert x)}{P(C_2 \vert x)}} \right)P(C_1 \vert x)} \right\}
\]

\noindent with
\[
f_{dI}^ * (x) = 2 - 2\sqrt 2 \left( {\sqrt x + \sqrt {1 - x} } \right) -
\frac{1}{2}\left[ {\ln 2 + x\ln x + (1 - x)\ln (1 - x)} \right],
\,
\forall x \in (0,1).
\]
\end{result}

\begin{proof} Let us consider
\[
\overline D _{dI} (C_1 ,C_2 ) = E_X \left\{ {f_{dI}^ * \left( {\frac{P(C_1
\vert x)}{P(C_2 \vert x)}} \right)P(C_1 \vert x)} \right\},
\]

\noindent where
\begin{align}
f_{dI}^ * (x) & = x\,f_{dI} \left( {\frac{1 - x}{x}} \right) \notag\\
& = 2 - 2\sqrt 2 \left( {\sqrt x + \sqrt {1 - x} } \right) -
\frac{1}{2}\left[ {\ln 2 + x\ln x + (1 - x)\ln (1 - x)} \right]\notag
\end{align}

\noindent with
\[
f_{dI} (x) = 2x + 2 - \left( {\sqrt x + 1} \right)\sqrt {2x + 2} -
\frac{1}{2}x\ln x - \frac{x + 1}{2}\ln \left( {\frac{2}{x + 1}} \right).
\]

Now we shall prove the convexity of the function $f_{dI} (x)$. We have
\[
{f}'_{dI} (x) = 2 - \frac{\sqrt {2x + 2} }{2\sqrt x } - \frac{1 + \sqrt x
}{\sqrt {2x + 2} } - \frac{1}{2}\ln x - \frac{1}{2}\ln \left( {\frac{2}{x +
1}} \right)
\]

\noindent and
\begin{align}
\label{eq49}
{f}''_{dI} (x) & = \frac{x^{3 / 2} + 1 - \sqrt x \sqrt {2x + 2} }{2x^{3 / 2}(x
+ 1)\sqrt {2x + 2} }\notag\\
& = \frac{1}{x^{3 / 2}(x + 1)\sqrt {2x + 2} }\left[ {\left( {\frac{x^{3 / 2}
+ 1}{2}} \right) - \sqrt x \sqrt {\frac{x + 1}{2}} } \right], \, \forall x > 0.
\end{align}

Now we shall prove the non-negativity of the expression (\ref{eq49}). We know that
\cite{tan4}, pp. 358:

\begin{center}
$\sqrt x \le \left( {\frac{\sqrt x + 1}{2}} \right)^2$ and $\left( {\frac{\sqrt
x + 1}{2}} \right)\sqrt {\frac{x + 1}{2}} \le \left( {\frac{x + 1}{2}}
\right)$.
\end{center}

This give
\begin{equation}
\label{eq50}
\sqrt x \sqrt {\frac{x + 1}{2}} \le \left( {\frac{\sqrt x + 1}{2}}
\right)^2\sqrt {\frac{x + 1}{2}} \le \left( {\frac{x + 1}{2}} \right)\left(
{\frac{\sqrt x + 1}{2}} \right).
\end{equation}

By simple calculations, we can check that
\begin{equation}
\label{eq51}
\left( {\frac{x + 1}{2}} \right)\left( {\frac{\sqrt x + 1}{2}} \right) \le
\frac{x^{3 / 2} + 1}{2}.
\end{equation}

The expressions (\ref{eq50}) and (\ref{eq51}) together give the non-negativity of the
expression (\ref{eq49}), i.e., ${f}''_{dI} (x) \ge 0$, $\forall x > 0$. This
proves the convexity of the function $f_{dI} (x)$, $\forall x > 0$.

\smallskip
We have
\begin{align}
\label{eq52}
f_{_{d\Delta } }^ * (x) & = f_{_{d\Delta } }^ * (1 - x), \, x \in (0,1),\\
\label{eq53}
f_{\left( {d\Delta } \right)_\infty } & = \mathop {\lim }\limits_{x \to \infty
} \frac{f_{d\Delta } (x)}{x} = 2 - \sqrt 2 - \frac{1}{2}\ln 2 = \frac{2}{4 - 2\sqrt 2 - \ln 2}
\intertext{and}
\label{eq54}
f_1 &= f_{d\Delta } (1) = 0.
\end{align}

The result (\ref{eq17}) together with the expressions (\ref{eq52})-(\ref{eq54}) gives the
required result (\ref{eq48}).
\end{proof}

\begin{result} The following bound holds
\begin{equation}
\label{eq55}
P_e \le \frac{1}{2}\left[ {1 - \frac{2}{3 - 2\sqrt 2 }\,\overline D _{dh}
(C_1 ,C_2 )} \right],
\end{equation}

\noindent where
\[
\overline D _{dh} (C_1 ,C_2 ) = E_X \left\{ {f_{dh}^ * \left( {\frac{P(C_1
\vert x)}{P(C_2 \vert x)}} \right)P(C_1 \vert x)} \right\}
\]

\noindent with
\[
f_{dh}^ * (x) = \frac{3}{2} - \sqrt 2 \left( {\sqrt x + \sqrt {1 - x} }
\right) + \frac{1}{2}\sqrt {x(1 - x)} ,
\,
\forall x \in (0,1).
\]
\end{result}

\begin{proof} Let us consider
\[
\overline D _{dh} (C_1 ,C_2 ) = E_X \left\{ {f_{dh}^ * \left( {\frac{P(C_1
\vert x)}{P(C_2 \vert x)}} \right)P(C_1 \vert x)} \right\},
\]

\noindent where
\[
f_{dh}^ * (x) = x\,f_{dh} \left( {\frac{1 - x}{x}} \right)
 = \frac{3}{2} - \sqrt 2 \left( {\sqrt x + \sqrt {1 - x} } \right) +
\frac{1}{2}\sqrt {x(1 - x)}
\]

\noindent with
\[
f_{dh} (x) = 2x + 2 - \left( {\sqrt x + 1} \right)\sqrt {2x + 2} -
\frac{\left( {\sqrt x - 1} \right)^2}{2}.
\]

Now we shall prove the convexity of the function $f_{dh} (x)$. We have
\[
{f}'_{dh} (x) = 2 - \frac{\sqrt {2x + 2} }{2\sqrt x } - \frac{1 + \sqrt x
}{\sqrt {2x + 2} } - \frac{\sqrt x - 1}{2\sqrt x }
\]

\noindent and
\begin{align}
{f}''_{dh} (x) & = \frac{2(x^{3 / 2} + 1) - (x + 1)\sqrt {2x + 2} }{4x^{3 /
2}(x + 1)\sqrt {2x + 2} }\notag\\
\label{eq56}
& = \frac{1}{x^{3 / 2}(x + 1)\sqrt {2x + 2} }\left[ {\frac{x^{3 / 2} + 1}{2}
- \left( {\frac{x + 1}{2}} \right)^{3 / 2}} \right], \, \forall x > 0,
\end{align}

The non-negativity of the expression (\ref{eq56}) follows from the fact that the function $\left( {\frac{x^s + 1}{2}} \right)^{1 / s},$  $s \ne 0$ is monotonically increasing function of $s$. For proof refer to Beckenbach and Bellman \cite{beb}. Thus, we have ${f}''_{dh} (x) \ge 0$, $\forall x > 0$, consequently proving the convexity of the function $f_{dh} (x)$, $\forall x > 0$.

\smallskip
We have
\begin{align}
\label{eq57}
f_{dh}^ * (x) & = f_{dh}^ * (1 - x), \, x \in (0,1),\\
\label{eq58}
f_{\left( {dh} \right)_\infty } & = \mathop {\lim }\limits_{x \to \infty }
\frac{f_{dh} (x)}{x} = \frac{3}{2} - \sqrt 2 = \frac{3 - 2\sqrt 2 }{2}
\intertext{and}
\label{eq59}
f_1 &= f_{dh} (1) = 0.
\end{align}

The result (\ref{eq17}) together with the expressions (\ref{eq57})-(\ref{eq59}) gives the required result (\ref{eq55}).
\end{proof}

\begin{result}The following bound holds
\begin{equation}
\label{eq60}
P_e \le \frac{1}{2}\left[ {1 - 4\,\overline D _{h\Delta } (C_1 ,C_2 )}
\right],
\end{equation}

\noindent where
\[
\overline D _{h\Delta } (C_1 ,C_2 ) = E_X \left\{ {f_{h\Delta }^ * \left(
{\frac{P(C_1 \vert x)}{P(C_2 \vert x)}} \right)P(C_1 \vert x)} \right\}
\]

\noindent with
\[
f_{h\Delta }^ * (x) = \frac{1}{4} + x(1 - x) - \sqrt {x(1 - x)} ,
\,
\forall x \in (0,1).
\]
\end{result}

\begin{proof} Let us consider
\[
\overline D _{h\Delta } (C_1 ,C_2 ) = E_X \left\{ {f_{h\Delta }^ * \left(
{\frac{P(C_1 \vert x)}{P(C_2 \vert x)}} \right)P(C_1 \vert x)} \right\},
\]

\noindent where
\[
f_{h\Delta }^ * (x) = x\,f_{h\Delta } \left( {\frac{1 - x}{x}} \right)
 = \frac{1}{4} + x(1 - x) - \sqrt {x(1 - x)}
\]

\noindent with
\[
f_{h\Delta } (x) = \frac{\left( {\sqrt x - 1} \right)^2}{2} - \frac{(x -
1)^2}{4(x - 1)}.
\]

Now we shall prove the convexity of the function $f_{h\Delta } (x)$. We have
\[
{f}'_{h\Delta } (x) = \frac{\sqrt x - 1}{2\sqrt x } - \frac{x - 1}{2(x + 1)}
+ \frac{(x - 1)^2}{4(x + 1)^2}
\]

\noindent and
\begin{align}
\label{eq61}
{f}''_{h\Delta } (x) & = \frac{x^4 + 3x^2 + 3x + x - 8x^{5 / 2}}{4x^{5 / 2}(x
+ 1)^3}\notag\\
&  = \frac{(x + 1)^3 - x^{3 / 2}}{4x^{3 / 2}(x + 1)^3}\notag\\
& = \frac{\left( {\sqrt x - 1} \right)^2\left[ {\left( {\sqrt x + 1}
\right)^2\left( {x + 1} \right) + 4x} \right]}{4x^{3 / 2}(x + 1)^3}  \ge 0, \, \forall x > 0.
\end{align}

Since, ${f}''_{h\Delta } (x) \ge 0$, $\forall x > 0$ proving the convexity of the function $f_{h\Delta } (x)$, $\forall x > 0$.

\smallskip
We have
\begin{align}
\label{eq62}
f_{h\Delta }^ * (x) & = f_{h\Delta }^ * (1 - x), \, x \in (0,1),\\
\label{eq63}
f_{\left( {h\Delta } \right)_\infty }  &= \mathop {\lim }\limits_{x \to \infty
} \frac{f_{h\Delta } (x)}{x} = \frac{1}{4}
\intertext{and}
\label{eq64}
f_1 &= f_{h\Delta } (1) = 0.
\end{align}

The result (\ref{eq17}) together with the expressions (\ref{eq62})-(\ref{eq64}) gives the required result (\ref{eq60}).
\end{proof}

\begin{result} The following bound holds
\begin{equation}
\label{eq65}
P_e \le \frac{1}{2}\left[ {1 - \frac{2}{1 - \ln 2}\,\overline D _{hI} (C_1
,C_2 )} \right],
\end{equation}

\noindent where
\[
\overline D _{hI} (C_1 ,C_2 ) = E_X \left\{ {f_{hI}^ * \left( {\frac{P(C_1
\vert x)}{P(C_2 \vert x)}} \right)P(C_1 \vert x)} \right\}
\]

\noindent with
\[
f_{hI}^ * (x) = \frac{1}{2}\left[ {1 - 2\sqrt {x(1 - x)} - \left( {\ln 2 +
x\ln x + (1 - x)\ln (1 - x)} \right)} \right], \, \forall x \in (0,1).
\]
\end{result}

\begin{proof} Let us consider
\[
\overline D _{hI} (C_1 ,C_2 ) = E_X \left\{ {f_{hI}^ * \left( {\frac{P(C_1
\vert x)}{P(C_2 \vert x)}} \right)P(C_1 \vert x)} \right\},
\]

\noindent where
\begin{align}
f_{hI}^ * (x) & = x\,f_{hI} \left( {\frac{1 - x}{x}} \right)\notag\\
& = \frac{1}{2}\left[ {1 - 2\sqrt {x(1 - x)} - \left( {\ln 2 + x\ln x + (1 - x)\ln (1 - x)} \right)} \right]\notag
\end{align}

\noindent with
\[
f_{hI} (x) = \frac{\left( {\sqrt x - 1} \right)^2}{2} - \frac{x}{2}\ln x -
\frac{x + 1}{2}\ln \left( {\frac{2}{x + 1}} \right).
\]

Now we shall prove the convexity of the function $f_{hI} (x)$. We have
\[
{f}'_{hI} (x) = \frac{\sqrt x - 1}{2\sqrt x } - \frac{1}{2}\ln x -
\frac{1}{2}\ln \left( {\frac{2}{x + 1}} \right)
\]

\noindent and
\begin{equation}
\label{eq66}
{f}''_{hI} (x) = \frac{\left( {\sqrt x - 1} \right)^2}{4x^{3 / 2}(x + 1)}
\ge 0,
\,
\forall x > 0.
\end{equation}

Since, ${f}''_{hI} (x) \ge 0$, $\forall x > 0$ proving the convexity of the
function $f_{hI} (x)$, $\forall x > 0$.

\smallskip
We have
\begin{align}
\label{eq67}
f_{hI}^ * (x) & = f_{h\Delta }^ * (1 - x), \, x \in (0,1),\\
\label{eq68}
f_{\left( {hI} \right)_\infty } & = \mathop {\lim }\limits_{x \to \infty }
\frac{f_{hI} (x)}{x} = \frac{1}{2} - \frac{1}{2}\ln 2 = \frac{1 - \ln 2}{2}
\intertext{and}
\label{eq69}
f_1 & = f_{hI} (1) = 0.
\end{align}

The result (\ref{eq17}) together with the expressions (\ref{eq67})-(\ref{eq69}) gives the required result (\ref{eq65}).
\end{proof}

\begin{result}The following bound holds
\begin{equation}
\label{eq70}
P_e \le \frac{1}{2}\left[ {1 - \frac{4}{2\ln 2 - 1}\,\overline D _{I\Delta }
(C_1 ,C_2 )} \right],
\end{equation}

\noindent where
\[
\overline D _{I\Delta } (C_1 ,C_2 ) = E_X \left\{ {f_{I\Delta }^ * \left(
{\frac{P(C_1 \vert x)}{P(C_2 \vert x)}} \right)P(C_1 \vert x)} \right\}
\]

\noindent with
\[
f_{I\Delta }^ * (x) = \frac{1}{2}\left[ {\ln 2 + x\ln x + (1 - x)\ln (1 -
x)} \right] - (2x - 1)^2,
\,
\forall x \in (0,1).
\]
\end{result}

\begin{proof} Let us consider
\[
\overline D _{I\Delta } (C_1 ,C_2 ) = E_X \left\{ {f_{I\Delta }^ * \left(
{\frac{P(C_1 \vert x)}{P(C_2 \vert x)}} \right)P(C_1 \vert x)} \right\},
\]

\noindent where
\begin{align}
f_{I\Delta }^ * (x) & = x\,f_{I\Delta } \left( {\frac{1 - x}{x}} \right)\notag\\
&  = \frac{1}{2}\left[ {\ln 2 + x\ln x + (1 - x)\ln (1 - x)} \right] - (2x - 1)^2 \notag
\end{align}

\noindent with
\[
f_{I\Delta } (x) = \frac{x}{2}\ln x + \frac{x + 1}{2}\ln \left( {\frac{2}{x
+ 1}} \right) - \frac{(x - 1)^2}{4(x + 1)}.
\]

Now we shall prove the convexity of the function $f_{I\Delta } (x)$. We have
\[
{f}'_{I\Delta } (x) = \frac{1}{2}\ln x + \frac{1}{2}\ln \left( {\frac{2}{x +
1}} \right) - \frac{x - 1}{2(x + 1)} + \frac{(x - 1)^2}{4(x + 1)^2}
\]

\noindent and
\begin{equation}
\label{eq71}
{f}''_{I\Delta } (x) = \frac{(x - 1)^2}{2x(x + 1)^3} \ge 0, \, \forall x > 0.
\end{equation}

Since, ${f}''_{I\Delta } (x) \ge 0$, $\forall x > 0$ proving the convexity
of the function $f_{I\Delta } (x)$, $\forall x > 0$.

\smallskip
We have
\begin{align}
\label{eq72}
f_{I\Delta }^ * (x) & = f_{I\Delta }^ * (1 - x), \, x \in (0,1),\\
\label{eq73}
f_{\left( {I\Delta } \right)_\infty } & = \mathop {\lim }\limits_{x \to \infty
} \frac{f_{I\Delta } (x)}{x}  = \frac{1}{2}\ln 2 - \frac{1}{4} = \frac{2\ln 2 - 1}{4}
\intertext{and}
\label{eq74}
f_1 &= f_{I\Delta } (1) = 0.
\end{align}

The result (\ref{eq17}) together with the expressions (\ref{eq72})-(\ref{eq74}) gives the required result (\ref{eq70}).
\end{proof}

\subsection{Comparison Results}

(i) In view of (\ref{eq39}) and (\ref{eq71}) we have
\begin{equation}
\label{eq75}
P_e \le \frac{1}{2}\left[ {1 - \frac{8}{3}\,\overline D _{h\Delta } (C_1
,C_2 )} \right] \le \frac{1}{2}\left[ {1 - 4\,\overline D _{I\Delta } (C_1
,C_2 )} \right].
\end{equation}

Since $2\ln 2 - 1 > 0$, then from (\ref{eq70}) and (\ref{eq75}) we have
\begin{equation}
\label{eq76}
P_e \le \frac{1}{2}\left[ {1 - \frac{4}{2\ln 2 - 1}\,\overline D _{I\Delta }
(C_1 ,C_2 )} \right]
 \le \frac{1}{2}\left[ {1 - 4\,\overline D _{I\Delta } (C_1 ,C_2 )}
\right].
\end{equation}

Thus, we observe from (\ref{eq75}) that the bound obtained in (\ref{eq70}) is sharper
than the one obtained applying directly the inequality (\ref{eq39}).

\smallskip
\noindent
(ii) In view of (\ref{eq39}) and (\ref{eq65}), we have
\begin{equation}
\label{eq77}
P_e \le \frac{1}{2}\left[ {1 - \frac{2}{1 - \ln 2}\,\overline D _{hI} (C_1
,C_2 )} \right] \le \frac{1}{2}\left[ {1 - \frac{8}{15(1 - \ln 2)}D_{d\Delta
} (P\vert \vert Q)} \right].
\end{equation}

Since $3 + 8\sqrt 2 - 15\ln 2 > 0$, then from (\ref{eq50}) and (\ref{eq77}), we have
\begin{equation}
\label{eq78}
P_e \le \frac{1}{2}\left[ {1 - \frac{2}{3 - 2\sqrt 2 }\,\overline D _{dh}
(C_1 ,C_2 )} \right] \le \frac{1}{2}\left[ {1 - \frac{8}{15(1 - \ln
2)}D_{d\Delta } (P\vert \vert Q)} \right].
\end{equation}

Again, we observe from (\ref{eq78}) that the bound obtained in (\ref{eq50}) is sharper
than the one obtained directly applying the inequality (\ref{eq39}).

\smallskip
\noindent
(iii) In view of (\ref{eq32}) and (\ref{eq40}), we have
\begin{equation}
\label{eq79}
P_e \le \frac{1}{2}\left[ {1 - \,\frac{4}{7 - 4\sqrt 2 }\,\overline D
_{d\Delta } (C_1 ,C_2 )} \right] \le \frac{1}{2}\left[ {1 - \frac{4}{7 -
4\sqrt 2 }\,\overline D _{h\Delta } (C_1 ,C_2 ))} \right].
\end{equation}

Since $7 - 4\sqrt 2 > 1$, then from (\ref{eq60}) and (\ref{eq79}) we have
\begin{equation}
\label{eq80}
P_e \le \frac{1}{2}\left[ {1 - 4\,\overline D _{h\Delta } (C_1 ,C_2 )}
\right] \le \frac{1}{2}\left[ {1 - \frac{4}{7 - 4\sqrt 2 }\,\overline D
_{h\Delta } (C_1 ,C_2 ))} \right].
\end{equation}

Here also, we observe from (\ref{eq80}) that making different comparisons, i.e., instead of using (\ref{eq39}) if we use (\ref{eq32}) still the result obtained in (\ref{eq60}) is sharper. Using the similar arguments we can also make comparisons for the other bounds.


\begin{thebibliography}{99}
\setlength{\itemsep}{5pt}

\bibitem{beb} E.F. BECKENBACH and R. BELLMAN,  \textit{Inequalities}. New York: SpringerVerlag, 1971.

\bibitem{bov} D.E. BOEKEE and J.C. VAN DER LUBBE, Some Aspects of Error Bounds in Feature Selection, \textit{Pattern Recognition}, \textbf{11}(1979), 353-360.

\bibitem{bur} J. BURBEA and C.R. RAO, On the Convexity of Some Divergence Measures Based on Entropy Functions, \textit{IEEE Trans. on Inform. Theory}, \textbf{IT-28}(1982), 489-495.

\bibitem{csi} I. CSISZ\'{A}R, Information Type Measures of Differences of Probability Distribution and Indirect Observations, \textit{Studia Math. Hungarica}, \textbf{2}(1967), 299-318.

\bibitem{hel} H. HELLINGER, Neue Begr\"{u}ndung der Theorie der quadratischen Formen von unendlichen vielen Ver\"{a}nderlichen, \textit{J. Reine Aug. Math., }\textbf{136}(1909), 210-271.

\bibitem{jef} H. JEFFREYS, An Invariant Form for the Prior Probability in Estimation Problems, \textit{Proc. Roy. Soc. Lon.}, \textbf{Ser. A}, \textbf{186}(1946), 453-461.

\bibitem{kai} T. KAILATH, The divergence and Bhatttacharyya distance in signal selection, \textit{IEEE Trans. Comm. Tech.}, \textbf{COM-15}(1967), 52-60.

\bibitem{kul} S. KULLBACK and R.A. LEIBLER, On Information and Sufficiency, \textit{Ann. Math. Statist.}, \textbf{22}(1951), 79-86.

\bibitem{sib} R. SIBSON, Information Radius, \textit{Z. Wahrs. und verw Geb.,} \textbf{14}(1969), 149-160.

\bibitem{tan1} I.J. TANEJA, On Generalized Information Measures and Their Applications, Chapter in: \textit{Advances in Electronics and Electron Physics}, Ed. P.W. Hawkes, Academic Press, \textbf{76}(1989), 327-413.

\bibitem{tan2} I.J. TANEJA, New Developments in Generalized Information Measures, Chapter in: \textit{Advances in Imaging and Electron Physics}, Ed. P.W. Hawkes, \textbf{91}(1995), 37-136.

\bibitem{tak} I.J. TANEJA and P. KUMAR, Relative Information of Type s, Csisz\'{a}r $f-$Divergence, and Information Inequalities, \textit{Information Sciences}, \textbf{166}(1-4)(2004), 105-125.

 \bibitem{tan3} I.J. TANEJA , On Symmetric and Nonsymmeric Divergence Measures and Their Generalizations, Chapter in: Advances in Imaging and Electron Physics, \textbf{138}(2005), 177-250.

 \bibitem{tan4} 	I.J. TANEJA, Refinement of Inequalities among Means,\textit{ Journal of Combinatorics, Information and Systems Sciences},  (special issue in honor of Prof. B.D. Sharma) Edited by Prof. Sat Gupta, Vol. 31(2006), 357-378.

\bibitem{tan5}  I.J. TANEJA, Refinement Inequalities Among Symmetric Divergence Measures, \textit{The Australian Journal of Mathematical Analysis and Applications}, \textbf{2}(1)(2005), Art. 8, pp. 1-23.

    \bibitem{tan6}  I.J. TANEJA, Symmetric Divergence Measures and Refinement Inequalities, under preparation.

\bibitem{tou1} G.T. TOUSSAINT, On the Divergence between two Distributions and the Probability of Misclassification of Several Decision Rules, \textit{Proc. Second International Joint Conf. on  Pattern Recog.}, Copenhagen, Denmark, August, 1-8, 1974.

\bibitem{tou2} G.T. TOUSSAINT, Probability of Error, Expected Divergence, and the Affinity of Several Distributions, \textit{IEEE Trans. on Syst. Man and Cybenetics}, \textbf{SMC-8}(1978),  482-485.

\bibitem{vaj} I. VAJDA, On the $f-$divergence and singularity of probability measures, \textit{Periodica Math. Hunger}, \textbf{2}(1972), 223-234.

\end{thebibliography}
\end{document}